
\documentclass[10pt,journal,compsoc]{IEEEtran}
%


%

%
\ifCLASSOPTIONcompsoc
  \usepackage[nocompress]{cite}
\else
  \usepackage{cite}
\fi
%

%
\ifCLASSINFOpdf
   \usepackage[pdftex]{graphicx}
   \DeclareGraphicsExtensions{.pdf,.jpeg,.png}
\else
\fi
%
%

%
\usepackage{amsmath}
\usepackage{amsthm}
\usepackage{amsbsy}
%

%
\usepackage{algorithm}
\usepackage{algpseudocode}
\algnewcommand\algorithmicin{\textbf{Input}}

%
\usepackage{array}

\newcommand{\PreserveBackslash}[1]{\let\temp=\\#1\let\\=\temp}
\newcolumntype{C}[1]{>{\PreserveBackslash\centering}p{#1}}
\newcolumntype{R}[1]{>{\PreserveBackslash\raggedleft}p{#1}}
\newcolumntype{L}[1]{>{\PreserveBackslash\raggedright}p{#1}}


\ifCLASSOPTIONcompsoc
  \usepackage[caption=false,font=footnotesize,labelfont=sf,textfont=sf]{subfig}
\else
  \usepackage[caption=false,font=footnotesize]{subfig}
\fi
\hyphenation{op-tical net-works semi-conduc-tor}
\usepackage{bm}
\mathchardef\mhyphen="2D 

\newtheorem{hyp}{Hypothesis}
\newtheorem{theorem}{Theorem}
\usepackage[switch]{lineno}
\usepackage{xcolor}
\usepackage{color,soul}

\usepackage{comment}

\begin{document}
%
\title{SS-GUMAP, SL-GUMAP, SSSL-GUMAP: \\
Fast UMAP Algorithms for Large Graph Drawing }
%
%
%
%

\author{Amyra Meidiana and Seok-Hee Hong
\IEEEcompsocitemizethanks{\IEEEcompsocthanksitem A. Meidiana and S.-H. Hong are with the University of Sydney.\protect\\
E-mail: \{amyra.meidiana,seokhee.hong\}@sydney.edu.au}
}

\IEEEtitleabstractindextext{%
\begin{abstract}
UMAP is a popular neighborhood-preserving dimension reduction (DR) algorithm. However, its application for graph drawing has not been evaluated. 
Moreover, a naive application of UMAP to graph drawing would include $O(nm)$ time all-pair shortest path computation, which is not scalable to visualizing large graphs.
%
In this paper, we present fast UMAP-based for graph drawing. Specifically, we present three fast UMAP-based algorithms for graph drawing:
(1) The SS-GUMAP algorithm utilizes spectral sparsification to compute a subgraph $G'$ preserving important properties of a graph $G$, reducing the $O(nm)$ component of the runtime to $O(n^2 \log n)$ runtime; 
(2) The SSL-GUMAP algorithm reduces the $k$NN ($k$-Nearest Neighbors) graph computation from $O(n \log n)$ time to linear time using partial BFS (Breadth First Search), and the cost optimization runtime from $O(n)$ time to sublinear time using edge sampling; 
(3) The  SSSL-GUMAP algorithm combines both approaches, for an overall $O(n)$ runtime.
Experiments demonstrate that SS-GUMAP runs 28\% faster than GUMAP, a naive application of UMAP to graph drawing, with similar quality metrics, while SL-GUMAP and SSSL-GUMAP run over 80\% faster than GUMAP with less than 15\% difference on average for all quality metrics.
We also present an evaluation of GUMAP to tsNET, a graph layout based on the popular DR algorithm t-SNE. GUMAP runs 90\% faster than tsNET with similar neighborhood preservation and, on average, 10\% better on quality metrics such as stress, edge crossing, and shape-based metrics, validating the effectiveness of UMAP for graph drawing.
\end{abstract}

\begin{IEEEkeywords}
Graph drawing, UMAP
\end{IEEEkeywords}}

\maketitle

\IEEEdisplaynontitleabstractindextext

%
\IEEEpeerreviewmaketitle

\IEEEraisesectionheading{\section{Introduction}\label{sec:introduction}}

%
%
%
%
\IEEEPARstart{T}{he} \emph{UMAP (Uniform Manifold Approximation and Projection)} algorithm~\cite{mcinnes2018umap} is a popular dimension reduction (DR) algorithm for visualizing high-dimensional data, being considered one of the top performing DR algorithms~\cite{espadoto2019toward}. It is based on the assumption that there exists a locally connected manifold where the high-dimensional data is uniformly distributed, and aims to preserve the topological structure of the manifold in a low-dimensional projection.

The projection is computed by first constructing a weighted \emph{k-nearest neighbor ($k$NN) graph} of the high-dimensional data, then computing a layout of the $k$NN graph that preserves its desired characteristics. 
The layout for the $k$NN graph is computed using a cost function aiming to minimize the disparity between the high-dimensional similarities of the data points and the low-dimensional similarity in the projection, with the cost minimization commonly done using gradient descent.

Overall, the runtime of UMAP is bounded by the runtime of computing the kNN graph, which can be computed in $O(n \log n)$ time~\cite{beygelzimer2006cover,liu2004investigation}, or empirically in $O(n^{1.14})$ time with approximate $k$NN~\cite{dong2011efficient}. 
The gradient descent iterations used to iteratively optimize the projection takes linear time in the number of edges of the kNN, which in turn would be $O(kn)$ time in the size of the high-dimensional data, where $k$ is the number of neighbors in the $k$NN and $n$ is the number of data points (i.e., linear in the input size $n$ if the neighborhood size $k$ is constant w.r.t. $n$).

With its superior performance in high-dimensional data visualization, it is a natural step to consider UMAP for graph drawing, such as has been done with other DR algorithms; a relatively recent example is the \emph{tsNET} algorithm ~\cite{kruiger2017graph} based on the popular \emph{t-SNE} algorithm~\cite{van2008visualizing}.

However, a naive application of UMAP to graph drawing would encounter scalability issues due to the need to compute the shortest path distance for all pairs of vertices, which takes $O(nm)$ time~\cite{chan2012all}, where $n = |V|$ and $m = |E|$; this is especially an issue for \emph{dense} graphs, where $m = O(n^2)$. Furthermore, recent work in graph drawing has presented have reduced the runtime of popular, traditionally $O(n^2)$ time algorithms to overall linear and even sub-linear time, including for force-directed~\cite{gove2019random,meidiana2020sublinear-tvcg} and stress minimization~\cite{ortmann2016sparse,meidiana2021stress} algorithms. Therefore, there is an opening for optimizing the runtime of UMAP specifically for graph drawing.



In this paper, we present fast UMAP-based algorithms, specifically designed for graph drawing. More specifically, we break down UMAP-based graph drawing algorithms into three components: computation of all-pairs shortest paths (denoted as \textbf{C0}, originally $O(nm)$ time), computation of the kNN (denoted as \textbf{C1}, originally $O(n \log n)$ time), and optimization of the cost function (denoted as \textbf{C2}, originally $O(n)$ time).

To reduce the runtime of the steps, our algorithms leverage graph sparsification and traversal methods to reduce the runtime of distance computations of graphs, where methods to speed up kNN computations on high-dimensional data, such as vantage-point trees~\cite{yianilos1993data}, cannot be applied directly, in order to speed up the cost optimization loop for graph drawing.

The SS-GUMAP algorithm newly integrates UMAP with \emph{spectral sparsification}~\cite{spielman2007spectral}, which computes a subgraph preserving important structural properties such as connectivity and commute distance, to reduce the runtime of shortest path computations; the SL-GUMAP algorithm further newly integrates partial BFS (Breadth-First Search) and edge sampling to achieve $O(n)$ runtime cost optimization for graph drawing with UMAP; and the SSSL-GUMAP algorithm combines both approaches for an $O(n)$ runtime cost optimization.

To put our work into context, we also present an evaluation of GUMAP, a naive application of UMAP for graph drawing, against tsNET, based on the popular DR algorithm t-SNE, which focuses on the same criteria of neighbourhood preservation as UMAP. Specifically, unlike existing comparisons of tsNET to other graph drawing algorithms that use random initialization~\cite{zhu2020drgraph,miller2023balancing}, we compare GUMAP to tsNET with Pivot MDS (PMDS) initialization, which has been shown to produce better quality than random initialization~\cite{kruiger2017graph}, for a fairer comparison of both algorithms.


The contributions of this paper can be summarized as:

\begin{enumerate}
    
    \item We present SS-GUMAP, which utilizes spectral sparsification to compute a subgraph $G'$ of a graph $G$ with $O(n \log n)$ edges, preserving important structural properties such as connectivity and commute distance of $G$. We then run UMAP on the subgraph $G'$, reducing the $O(nm)$ time shortest path computation of $G$ for C0 to $O(n^2 \log n)$ time.
    
     \item We present SL-GUMAP, which utilizes partial BFS to compute the shortest paths and kNN for C0 and C1 $O(n)$ time, and sliding window-based edge sampling to reduce the cost optimization iterations for C2 to sub-linear time. 
        
     \item We present SSSL-GUMAP, which combines both approaches by running SL-GUMAP on the spectral sparsification $G'$ of a graph $G$ to reduce C0 and C1 to $O(n)$ time and C2 to sub-linear time, with faster runtime than SL-GUMAP in execution.

    
    \item We implement and evaluate our algorithms against GUMAP, a naive application of UMAP to graph drawing, with various real-world and synthetic graphs. 
    Experiments demonstrate that our algorithms are significantly faster than UMAP, at 28\% faster by  SS-GUMAP, and over 80\% faster by SL-GUMAP and SSSL-GUMAP, while maintaining good quality metrics (neighborhood preservation, stress, edge crossing, shape-based metrics).
     
    \item As a secondary experiment, we evaluate GUMAP against tsNET~\cite{kruiger2017graph}, a graph-drawing algorithm based on the pt-SNE algorithm~\cite{van2008visualizing}. Different from the existing comparison~\cite{miller2023balancing}, we use tsNET with PMDS initialization for a fairer comparison. Experiments show that GUMAP runs on average over 90\% faster than tsNET with similar neighborhood preservation, and better performance on other quality metrics (stress, edge crossing, shape-based metrics).

\end{enumerate}


Table \ref{tab:runtime_umap} summarizes   
the theoretical runtime analysis of our fast algorithms against GUMAP, the naive application of UMAP for graph drawing, per component, where the bold entries highlight our contribution in the runtime reduction. 

\begin{table}[h]
    \centering
    \scriptsize
    \caption{Runtimes of fast \texttt{GUMAP} algorithms}
    \begin{tabular}{|c|c|c|c|c|}
        \hline 
        Algo. & C0 & C1 & C2 & Overall \\ \hline
        GUMAP & $O(nm)$ & $O(n \log n)$ & $O(n)$ & $O(nm)$ \\ \hline
            SS-GUMAP & $\pmb{O(n^2 \log n)}$ & $O(n \log n)$ & $O(n)$ & $\pmb{O(n^2 \log n)}$ \\ 
            & & & & (Thm. \ref{theorem:ssgumap}) \\ \hline
        SL-GUMAP & $\pmb{O(n)}$ & $\pmb{O(n)}$ & $\pmb{O(n^{0.9})}$ & $\pmb{O(n)}$ (Thm. \ref{theorem:slgumap}) \\ \hline
        SSSL-GUMAP & $\pmb{O(n)}$  & $\pmb{O(n)}$  & $\pmb{O(n^{0.9})}$  & $\pmb{O(n)}$ (Thm. \ref{theorem:ssslgumap})  \\ \hline
    \end{tabular}
    \label{tab:runtime_umap}
\end{table}

\section{Related Work}
\label{sec:related_work}

\subsection{UMAP}
\label{sec:relwork_umap}

UMAP (Uniform Manifold Approximation and Projection)~\cite{mcinnes2018umap} is a manifold learning technique for DR. 
It assumes that there exists a locally connected manifold where the high-dimensional data is uniformly distributed, and aims to compute a low-dimensional projection preserving the topological structure. 
UMAP consists of two major steps:
\begin{description}
    \item[\textit{C1:}] Construct the $k$NN ($k$-Nearest Neighbor) graph of the input data that represents the manifold.
    \item[\textit{C2:}] Compute a layout for the $k$NN graph preserving desired characteristics.
\end{description}

%
C1 computes a weighted $k$NN graph, where each edge between a data point $v_i$ and one of its nearest neighbors $v_j$ is weighted using the \emph{high-dimensional similarity} computed as $h_{ij} = exp[(-d_{ij}-\rho_i)/\sigma_i]$, where 
$d_{ij}$ is the high-dimensional distance between $v_i$ and $v_j$, $\rho_i$ is the distance from $v_i$ to its nearest neighbor, and $\sigma_i$ is a bandwidth similar to the perplexity used in t-SNE~\cite{van2008visualizing}. 

C2 computes a layout for the $k$NN graph by minimizing the difference between the aforementioned high-dimensional similarity $h_{ij}$ between data points $v_i$ and $v_j$, and the \emph{low-dimensional similarity} between $X_i$ and $X_j$, the low-dimensional projections of $v_i$ and $v_j$, which is computed as $w_{ij} = \left (1 + \alpha || X_i - X_j ||^{2\beta}_2\right )$ where $\alpha,\beta$ are positive constants.
The projection is computed by minimizing the \emph{cost function} $C$ through gradient descent~\cite{mcinnes2018umap}: 

\[
 C = \sum_{i\neq j} h_{ij} \log \left ( \frac{h_{ij}}{w_{ij}} \right ) + (1 - h_{ij}) \log \left ( \frac{1 - h_{ij}}{1 - w_{ij}} \right )   
\]

Overall, the time complexity of UMAP is bounded by the computation of the $k$NN graph. 
If the distances between data points are not precomputed, building the $k$NN graph can be done in $O(dn \log n)$ time, where $d$ is the number of dimensions, by computing a $kd$ tree of the data points~\cite{beygelzimer2006cover}.
Approximate $k$NN can also be used, which has been shown to empirically run in  $O(n^{1.14})$ time~\cite{mcinnes2018umap,dong2011efficient}.

Compared to other state-of-the-art DR algorithms such as t-SNE, UMAP runs much faster, with its main strength being neighborhood preservation, especially in displaying clustering structures~\cite{mcinnes2018umap,espadoto2019toward}. However, this may come at the expense of preservation of global structure, such as distance preservation~\cite{mcinnes2018umap,espadoto2019toward}. 
Other UMAP-based algorithms attempt to address this limitation, such as UMATO~\cite{jeon2022uniform}, which uses \emph{extended} nearest neighbors to compute the global skeleton structure before optimizing the local neighborhoods.

\subsection{Graph Sampling and Spectral Sparsification}


A popular approach for drawing large graphs uses \emph{graph sampling}, where a significantly smaller subgraph $G'$ is sampled and then drawn instead of the original graph $G$~\cite{leskovec2006sampling}. 
While the sampled subgraph $G'$ can be drawn much faster, the challenge is to compute a good sample that is representative of the original graph $G$.
For example, the simplest stochastic sampling methods, such as Random Vertex or Random Edge sampling, often fail to preserve important structural properties such as connectivity~\cite{wu2017evaluation,hong2018bc}.

Recent methods to improve the quality of sample graphs for visualization utilize the topological structure of the graph $G$ in sampling, such as always including cut vertices or separation pairs~\cite{hong2018bc,meidiana2019topology}, as well as using spectral sparsification~\cite{eades2017drawing,JM,meidiana2019topology}.

{\em Spectral sparsification (SS)} computes a subgraph $G'$, where the Laplacian quadratic form is approximately the same as the original graph $G$ on all real vector inputs.
Every \(n\)-vertex graph has a spectral approximation with \(O(n \log n)\) edges~\cite{spielman2011spectral}, which can be computed based on the {\em effective resistance} values of edges, closely related to important properties such as connectivity, commute distance, and clustering~\cite{spielman2011graph} and can be computed in near-linear $O(n \log n)$ time~\cite{spielman2011graph}.

Spectral sparsification has been shown to be effective for sampling large graphs, outperforming random sampling methods on sampling quality metrics~\cite{eades2017drawing,JM,JMjournal,meidiana2019topology}. Moreover, SS has been integrated with graph topology, such as biconnected component decomposition using BC (Block-Cut vertex) trees~\cite{Hu3} or triconnected component decomposition using SPQR trees~\cite{meidiana2019topology} to reduce the runtime while improving the sampling quality.

SS has also been successfully used for state-of-the-art fast graph drawing algorithms for large and complex graphs. 
For example, the {\em SublinearForce} framework~\cite{meidiana2020sublinear-tvcg} utilizes $SS$ to significantly improve the runtime of the force computation of the force-directed algorithm, from quadratic to sublinear time, also resulting in better quality drawings. 
Similarly, $SS$ has also been used for stress-based algorithms {\em SublinearSM} and {\em SublinearSGD}~\cite {meidiana2021stress}, significantly improving the runtime of stress computation, from quadratic to sublinear time, while maintaining similar quality drawings.

\section{Fast UMAP-Based Graph Drawing algorithms}
\label{sec:umap_algs}

\subsection{GUMAP}

Before presenting our fast UMAP algorithms, we first describe GUMAP, a naive application of UMAP to graph drawing. GUMAP consists of the following steps:

\begin{description}
    \item[\textbf{GUMAP}]
    \item[\textit{C0:}] Compute the distance matrix of the graph $G$.
    \item[\textit{C1:}] Compute the $k$NN graph $G_k$. 
    \item[\textit{C2:}] Compute a layout for $G_k$. 
\end{description}

To apply UMAP for graph drawing, we add the shortest path computation in C0, which is commonly required for DR-based graph drawing algorithms, on top of C1 and C2, which originally came from UMAP (see Section \ref{sec:relwork_umap} for details).

\begin{algorithm}
 \caption{\textbf{GUMAP}}
 \begin{algorithmic}[1]
 \State \textbf{Input:} Graph $G = (V,E)$, neighborhood size $k$, \# of iterations $t$

 \State // C0
 \State $S=$AllPairsShortestPath($G$) \label{line:allpairs}

 \State // C1
 \State $G_k = kNN(S)$ \label{line:knn}

 \State // C2
 \State $X:$ Coordinates of the vertices in the drawing \label{line:cstart}
 \State Initialize the coordinates in $X$ using spectral layout
  \For{iterations in 1 to $t$}
    \For{$(v_a,v_b) \in E$}
        \State $X_a,X_b$: Coordinates of vertices $v_a,v_b$
        \State $w_{ab} = \left (1 + \alpha || X_a - X_b ||^{2\beta}_2\right )$
        \State Compute gradient of $C$ at $a,b$ based on $w_{ab}$
        \State Move $X_a,X_b$ based on the gradient
    \EndFor
 \EndFor \label{line:cend}
 \State $D$: layout of $G$ with $X$ as coordinates
 \State \textbf{return} $D$
 \end{algorithmic}
 \label{alg:umapg}
\end{algorithm}

Algorithm \ref{alg:umapg} shows a high-level description of GUMAP. Line \ref{line:allpairs} corresponds to C0, to compute the distance matrix by computing the shortest path length between every pair of vertices. Line \ref{line:knn} corresponds to C1, to compute the $k$NN graph based on the distance matrix from C0. Lines \ref{line:cstart} to \ref{line:cend} correspond to C2, comprised of the gradient descent procedure used in UMAP to minimize the cost function $C$.

\begin{theorem}
   GUMAP runs in $O(nm)$ time (C0: $O(nm)$ time, C1: $O(n \log n)$ time, C2: $O(n)$ time) with $O(n \log n + m)$ space complexity.
\end{theorem}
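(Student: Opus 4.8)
The plan is to treat the three phases C0, C1, and C2 independently: since Algorithm~\ref{alg:umapg} executes them in sequence, the total running time is the sum of the three phase costs, and the total space is the sum of the structures that must persist across phases together with the working space of the most expensive phase. First I would establish the per-phase time bounds and argue that one of them dominates, and then I would treat the space bound separately, since that is where the only real subtlety lies.

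For the time, C0 (Line~\ref{line:allpairs}) computes all-pairs shortest paths by running a single-source computation from each of the $n$ vertices; on an unweighted graph each such computation is a BFS costing $O(m)$, for $O(nm)$ total, matching the cited bound~\cite{chan2012all}. C1 (Line~\ref{line:knn}) extracts, for each vertex, its $k$ nearest neighbors, which the cover-tree / $kd$-tree machinery of Section~\ref{sec:relwork_umap} performs in $O(n \log n)$ time~\cite{beygelzimer2006cover}. C2 (Lines~\ref{line:cstart}--\ref{line:cend}) is the gradient-descent loop, whose inner loop runs once per edge of $G_k$ with $O(1)$ work per edge to evaluate $w_{ab}$ and update two coordinates; as $G_k$ has $O(kn)$ edges and there are $t$ iterations, this is $O(tkn)$, i.e.\ $O(n)$ once $t$ and $k$ are treated as constants in $n$ as in~\cite{mcinnes2018umap}. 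Summing gives $O(nm) + O(n \log n) + O(n)$, and because $G$ is connected we have $m \ge n-1$, so $nm = \Omega(n^2)$ swallows the $O(n \log n)$ term and the overall runtime is $O(nm)$.

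The step I expect to be the main obstacle is the space bound $O(n \log n + m)$, because a literal reading of C0 stores an $n \times n$ distance matrix $S$, which is $\Theta(n^2)$ and would break the claim. The resolution I would argue is that $S$ need never be materialized in full: the distances can be generated one source at a time, so each per-source BFS reuses only $O(n+m)$ working space that is freed before the next source is processed, and only the per-vertex nearest-neighbor lists feeding C1 must be retained. What then persists is the input graph ($O(n+m) = O(m)$), the kNN structure associated with C1, and the coordinate array $X$ ($O(n)$), giving $O(n \log n + m)$ in total. I would make this streaming-over-sources argument explicit, since it is exactly what separates the claimed bound from the naive $\Theta(n^2)$ one; the delicate point I would have to reconcile is that reading neighbors off the BFS discovery order trivializes C1, so I must take care that the accounting remains consistent with the $O(n \log n)$ time attributed to C1 (inherited from the generic distance-matrix formulation of UMAP) while still charging only $O(n \log n + m)$ space to the whole procedure.
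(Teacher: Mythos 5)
Your proposal is correct, and on the time side it follows exactly the paper's route: the paper's proof likewise asserts $O(nm)$ for C0 (all-pairs shortest paths), $O(n\log n)$ for C1 via $kd$-trees with $k,d$ constant in $n$, and $O(n)$ for C2 because the $k$NN graph has $O(kn)$ edges with $k=O(1)$; your additional remark that $m\ge n-1$ lets the $O(nm)$ term absorb $O(n\log n)$ is left implicit in the paper but is the obvious justification for its overall bound. Where you genuinely diverge is the space bound, and there your version is strictly more careful than the published one. The paper's proof simply tallies three components---$O(n+m)$ for the input, $O(n\log n)$ ``to compute the kNN,'' and $O(n)$ for the $v_{ij}$ values---without ever confronting the fact that Algorithm~\ref{alg:umapg} as written materializes the full $n\times n$ distance matrix $S$ in line~\ref{line:allpairs}, which is $\Theta(n^2)$ and contradicts the claimed $O(n\log n+m)$ bound on its face. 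Your streaming-over-sources argument (run one BFS at a time in reusable $O(n+m)$ working space, retain only the $O(kn)$ per-vertex nearest-neighbor entries) is precisely the missing step that makes the stated space complexity actually hold, so you have repaired a gap rather than introduced one. The residual tension you flag---that BFS discovery order hands you the $k$ nearest neighbors and thereby trivializes C1---is harmless for the theorem, since the stated $O(n\log n)$ for C1 is an upper bound and remains valid for the trivial extraction; indeed this observation is essentially what the paper itself later exploits in PartialBFS for SL-GUMAP. The one loose end worth tightening in a final write-up: be explicit that C2's $O(n)$ bound treats the iteration count $t$ as a constant (the paper does this silently as well), since otherwise the honest bound is $O(tkn)$.
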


\begin{proof}
The runtime of C0 is $O(nm)$ time, for computing all-pairs shortest paths. The runtime of C1, for computing $k$NN graph, can be done in $O(n \log n)$ time by using $kd$ trees, where $k$ and $d$ are constant with respect to $n$. 
C2 takes linear time in the number of edges of the $k$NN graph; taking $k$ as $O(1)$ in the number of vertices, this is equivalent to $O(n)$ time as the number of edges in the $k$NN graph is of order $O(kn)$. 

Meanwhile, the space complexity consists of the $O(n + m)$ space needed to store the input data, the $O(n \log n)$ space to compute the kNN, and the $O(n)$ space to store the $v_{ij}$ values.

\end{proof}

Note that there is an inherent $O(n + m)$-time component for reading the input and rendering the output of graph drawing, due to the input data being $O(n + m)$ in size. However, as this component is shared throughout every algorithm, we have omitted this component in our runtime analysis.


\subsection{SS-GUMAP: UMAP with Spectral Sparsification}

As the first step to reduce the runtime of GUMAP, we first present SS-GUMAP, which runs GUMAP as-is on the spectral sparsification $G'$ of a graph $G$. 
As the runtime of shortest path computation for graphs is dependent on the number of edges ($O(nm)$ time), we use spectral sparsification as pre-processing to reduce the runtime of the distance computation, while preserving the graph's ground truth structure. Our usage of the near-linear-time spectral sparsification is in the same vein as the usage of near-linear-time pre-processing in other fast DR- and stress-based graph-drawing algorithms, such as Pivot MDS~\cite{brandes2006eigensolver} and Sparse Stress Minimization~\cite{ortmann2016sparse}, which use near-linear-time pivot computation as pre-processing to obtain a linear-time cost minimization step.

\begin{algorithm}
 \caption{\textbf{SS-GUMAP}}
 \begin{algorithmic}[1]
 \State \textbf{Input:} Graph $G=(V,E)$, spectral sparsification $G' = (V,E')$ of $G$, neighborhood size $k$, \# of iterations $t$

 \State // C0
 \State $S'=$AllPairsShortestPath($G'$)

 \State // C1
 \State $G'_k = kNN(S')$

 \State // C2
 \State $X:$ Coordinates of the vertices in the drawing
 \State Initialize the coordinates in $X$ using spectral layout
  \For{iterations in 1 to $t$}
    \For{$(v_a,v_b) \in E'$}
        \State $X_a,X_b$: Coordinates of vertices $v_a,v_b$
        \State $w_{ab} = \left (1 + \alpha || X_a - X_b ||^{2\beta}_2\right )$
        \State Compute gradient at $a,b$ based on $w_{ab}$
        \State Move $X_a,X_b$ based on the gradient
    \EndFor
 \EndFor
 \State $D'$: layout of $G'$ with $X$ as coordinates
 \State $D$: $D'$ with all edges of $G$ added in
 \State \textbf{return} $D$
 \end{algorithmic}
 \label{alg:umapss}
\end{algorithm}

Algorithm \ref{alg:umapss} shows a high-level description of SS-GUMAP. 
As pre-processing, we compute a sparsification $G' = (V, E' \subset E)$ of a graph $G = (V, E)$ using spectral sparsification: first, compute the effective resistance values of all edges in $E$, which can be done in $O(m \log n)$ time. $G'$ is then computed by computing the maximum spanning tree with effective resistance values as weights, then adding edges in decreasing order of effective resistance values until $n \log n$ edges are included. 
In the case that a graph is disconnected, an arbitrarily large ``infinite'' value can be assigned to pairs of vertices with no paths connecting them.

We use the spectral layout~\cite{koren2003spectral} for initialization, based on the usage of spectral layout for initialization in UMAP for DR, which has been shown to converge faster and stable, compared to random initialization~\cite{mcinnes2018umap}. The spectral layout for graph drawing takes linear time in the number of edges, due to the Laplacian matrix only having non-zero entries in the diagonal and on entries corresponding to edges, and runs very fast in practice~\cite{koren2003spectral}.

SS-GUMAP runs steps C0-2 on $G'$, to obtain the coordinates $X$ of the vertices, and consequently returns a drawing $D'$ of $G'$; we use the same cost optimization steps as the existing base UMAP algorithm. 
To obtain a drawing $D$ of the original graph $G$, add the sparsified edges into $D'$.

\begin{theorem} \label{theorem:ssgumap}
    SS-UMAP runs in $O(n^2 \log n)$ time (C0: $O(n^2 \log n)$ time, C1: $O(n \log n)$ time, C2: $O(n)$ time) with $O(n \log n + m)$ space complexity.
\end{theorem}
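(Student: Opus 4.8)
The plan is to analyze SS-GUMAP component by component, leveraging the observation that it executes exactly the GUMAP pipeline of the previous theorem, only on the sparsifier $G' = (V, E')$ rather than on $G$, together with one additional pre-processing step (the spectral sparsification itself). I would therefore first account for the cost of computing $G'$, and then reuse the per-component analysis of the GUMAP theorem with $m$ replaced by $m' = |E'| = O(n \log n)$.

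For the pre-processing, I would invoke the stated facts that the effective resistance values of all edges are computable in $O(m \log n)$ time, and that $G'$ is then assembled by a maximum spanning tree computation plus greedily adding edges in decreasing order of resistance until $n \log n$ edges are present; the sorting and selection of edges is $O(m \log n)$, so the whole pre-processing runs in $O(m \log n)$ time. Since the worst case is a dense graph with $m = O(n^2)$, this contributes at most $O(n^2 \log n)$, which will be subsumed by the C0 term.

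The key step is the analysis of C0. Here the all-pairs shortest path routine of the GUMAP theorem runs on $G'$, which has $m' = O(n \log n)$ edges; substituting into the $O(n m')$ bound for APSP gives $O(n \cdot n \log n) = O(n^2 \log n)$ time, the dominant term. For C1, the kNN is still computed from the resulting distance data in $O(n \log n)$ time exactly as before, since the number of vertices is unchanged and $k, d$ remain constant in $n$. For C2, the cost minimization again runs in time linear in the edges of the kNN graph $G'_k$, which is $O(kn) = O(n)$; the spectral-layout initialization costs time linear in $|E'| = O(n \log n)$, which is absorbed. Summing the components yields $O(n^2 \log n) + O(n \log n) + O(n) = O(n^2 \log n)$ overall.

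For the space complexity, I would argue as in the GUMAP theorem: $O(n + m)$ to hold the input (the original $G$ together with its sparsifier $G'$, whose $O(n \log n)$ edges are absorbed), $O(n \log n)$ for the kNN structure, and $O(n)$ for the stored similarity values, for a total of $O(n \log n + m)$. The main obstacle I anticipate is bookkeeping rather than conceptual: I must be careful that the pre-processing term $O(m \log n)$ does not dominate (it is bounded by $O(n^2 \log n)$ only because $m = O(n^2)$), and that C2 is charged against the $O(n)$ edges of the kNN graph rather than the $O(n \log n)$ edges of $E'$ over which the inner loop is written, mirroring the convention used for GUMAP.
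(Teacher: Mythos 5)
Your proposal is correct and follows essentially the same route as the paper: the paper's proof likewise just substitutes the sparsified edge count $m' = O(n\log n)$ into the GUMAP component analysis, giving $O(nm') = O(n^2\log n)$ for C0 while C1 and C2 retain their $O(n\log n)$ and $O(n)$ bounds, with the same space accounting. Your additional bookkeeping --- checking that the $O(m\log n)$ sparsification pre-processing and the spectral-layout initialization are subsumed, and noting that C2 must be charged to the kNN graph's $O(n)$ edges rather than the $E'$ loop as written --- is more careful than the paper's proof, which silently omits these points, but it does not change the argument.
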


\begin{proof}
By sparsifying the edge set to $O(n \log n)$ size, all-pairs shortest path computation for $C0$ can be reduced from $O(nm)$ time to $O(n^2 \log n)$ time, with C1 and C2 maintaining their $O(n \log n)$ and $O(n)$ runtimes respectively.
On top of the $O(n + m)$ space required for the input graph, all-pairs shortest path takes linear space in the number of edges, which is $O(n \log n)$ after sparsification, and storing the $v_{ij}$ values take $O(n)$ space, for a total of $O(n \log n)$ space.
\end{proof}

\subsection{SL-GUMAP: Linear-Time UMAP for Graph Drawing}

In this section, we present SL-GUMAP, a linear-time UMAP-based graph drawing algorithm. 
Specifically, the runtime of Steps 0 and 1 is reduced from $O(n \log n)$ time to $O(n)$ time by using partial BFS, and the runtime of Step 2 is reduced from linear time to sublinear time by using a sampling approach in each iteration.

Algorithm \ref{alg:lumap} shows a high-level overview of the SL-GUMAP algorithm,
which consists of two parts: part 1, computing the partial distance matrix and $k$NN graph, corresponding to C0 and C1; and part 2, the optimization loop of the cost function, corresponding to C2. 
Most of the work is contained in the subroutines PartialBFS for part 1 and GradSample for part 2, which we present in the following sections.

\begin{algorithm}
 \caption{\textbf{SL-GUMAP}}
 \begin{algorithmic}[1]
 \State \textbf{Input:} Graph $G = (V,E)$, neighborhood size $k$, sample size $samp$, \# of iterations $t$
 \State // C0 and C1
 \State $S,G_k$: PartialBFS($G$, $k$) // partial distance matrix and $k$NN graph

 \State // C2
 \State $X:$ Coordinates of the vertices in the drawing
 \State Initialize the coordinates in $X$ using spectral layout
 \State GradSample($G_k$, $X$, $samp$, $t$)
 \State $D$: Drawing of $G$ with $X$ as coordinates
 \State  \textbf{return} $D$
 \end{algorithmic}
 \label{alg:lumap}
\end{algorithm}

\subsubsection{Partial BFS for $k$NN Graph Computation}

For high-dimensional data, computing the $k$NN graph forms the bottleneck of the runtime of UMAP, taking $O(n \log n)$ time. 
When applying UMAP for drawing a graph $G = (V, E)$, an additional step is required to compute the distance matrix of the graph, as the distances between vertices cannot be computed the same way as the distance between two high-dimensional data points. 
Computing the full distance matrix of a graph takes $O(nm)$ time~\cite{chan2012all}, which can be prohibitively slow, especially for dense graphs ($m = O(n^2)$). 
However, as UMAP only needs the $k$NN graph to compute the layout, only the distances from each vertex to its $k$ nearest neighbors need to be computed.

\begin{algorithm} [t!]
 \caption{\textbf{PartialBFS}}
 \begin{algorithmic}[1]
 \State Graph $G = (V,E)$, neighborhood size $k$
 \State $S$: $n \times n$ matrix initialized to infinity // distance matrix
 \State $E_k =$ [] // edge list for $k$NN graph
 \For{$v \in V$}
    \State $visited=[]$ // list of already visited vertices
    \State $tovisit=[v]$ // list of vertices to be visited
    \State $dist=$1
    \While{$visited.size \leq k$} \label{line:traversestart}
        \State $newvisit=[]$
        \For{$u \in tovisit$}
            \State $visited.push(u)$
            \State $S[v][u]=dist$
            \State $E_k.push((v,u))$
            \If{$visited.size > k$} \label{line:kcheckstart}
                \State break
            \EndIf \label{line:kcheckend}
            \For{$w$ where $(u,w) \in E$}
                \If{$w \notin visited$ and $w \notin tovisit$}
                    \State $newvisit.push(w)$
                \EndIf
            \EndFor
        \EndFor
        \State $tovisit = newvisit$
        \State $dist++$
    \EndWhile \label{line:traverseend}
 \EndFor
 \State $G_k = (V, E_k)$ // $k$NN graph
 \State \textbf{return} $S,G_k$
 \end{algorithmic}
 \label{alg:pbfs}
\end{algorithm}

To reduce the runtime of this step, we use partial \emph{BFS (Breadth-First Search)}: 
for each vertex $v \in V$, perform a BFS starting from $v$ and record the distance from $v$ to the visited vertices, stopping once $k$ other vertices have been visited; this is done to compute the $k$-nearest neighbors to reduce the runtime of shortest path computations. When multiple vertices have the same shortest distance to $v$, ties are broken at random.

As a preliminary experiment, we compared the partial BFS approach to a pivot-based method such as used in Pivot MDS~\cite{brandes2006eigensolver}, where a partial distance matrix is constructed by computing the distances only between each vertex and vertices in a pivot set $P \subset V, |P| = O(1)$ w.r.t. $n$. However, the pivot-based approach resulted in far lower neighborhood preservation, due to the pivot-based distance matrix leading to the pivots forming the $k$NN of every vertex. Meanwhile, partial BFS is able to maintain a higher level of neighborhood preservation; therefore, we focus on the partial BFS approach. This method also differs from the partial distance matrix computation, such as in DRGraph~\cite{zhu2020drgraph}, by computing the partial BFS until $k$ vertices are visited, rather than limiting to direct neighbors.

In addition to reducing the runtime of C0, computing the partial BFS also eliminates the need to run C1, as the $k$ vertices visited in the partial BFS starting from a vertex $v$ already form the $k$-nearest neighbors of $v$, 
i.e., the $k$NN graph is already computed. 
Therefore, this reduces the runtime of both Step 0 and Step 1 to $O(kn)$ time in practice; i.e., $O(n)$ if $k = O(1)$ w.r.t. $n$.

Algorithm \ref{alg:pbfs} shows the steps of PartialBFS in detail. 
The steps are roughly similar to a regular BFS traversal, however, with an additional check to stop the traversal when $k$ other vertices have been visited (lines \ref{line:kcheckstart}-\ref{line:kcheckend}). 
In addition, the distance between the starting vertex $v$ and each vertex visited in the traversal starting from $v$ is recorded, using a counter that tracks the current BFS level. 

The returned distance matrix  $S$ is not a full all-pairs distance matrix: for each row corresponding to a vertex $v$, only $k$ columns corresponding to the $k$ nearest neighbors will have non-infinite distances, as the matrix is initialized to infinity and only the entries corresponding to the $k$ nearest neighbors are updated in the traversal loop (lines \ref{line:traversestart}-\ref{line:traverseend}).

\subsubsection{Sublinear-Time Cost Optimization}

We now present how SL-GUMAP reduces the runtime of C2 to sublinear in the number of vertices, by sampling a sublinear number of edges in each iteration.

C2 involves iteratively minimizing the cost function $C$ of UMAP, using gradient descent. 
This requires computing the low-dimensional similarity $w_{ij}$ on each edge $(i,j)$ of the $k$NN graph. 
As such, the time complexity is $O(kn)$ time, i.e., linear in the number of edges of the $k$NN graph, which is linear in the number of vertices if $k = O(1)$ w.r.t $n$. 
Nevertheless, there is still an opening to further reduce the runtime to sublinear in the number of vertices.

\begin{algorithm}[t!]
 \caption{\textbf{GradSample}}
 \begin{algorithmic}[1]
 \State \textbf{Input:} $k$NN graph $G_k = (V,E_k)$, coordinates of vertices $X$, sample size exponent $samp$, \# of iterations $t$
 \State Shuffle contents of $E_k$
 \State $j=0$
 \State $s = |E_k|^{samp}$
 \For{iterations in 1 to $t$}
    \For{$i \in [0,s]$} \label{line:windowstart}
        \State $(a,b) = E_k[(j+i) \mod |E_k|]$
        \State $X_a,X_b$: Coordinates of vertices $a,b$
        \State $w_{ab} = \left (1 + \alpha || X_a - X_b ||^{2\beta}_2\right )$
        \State Compute gradient at $a,b$ based on $w_{ab}$
        \State Move $X_a,X_b$ based on the gradient
    \EndFor \label{line:windowend}
    \State $j = (j+s) \mod |E_k|$
 \EndFor
 \end{algorithmic}
 \label{alg:sublc}
\end{algorithm}

In SL-GUMAP, we reduce the runtime of the gradient descent loop in Step 2 by using \emph{random sampling} of the edges of the $k$NN graph. 
To simulate randomness while keeping the empirical runtime low, we sample the edges of the $k$NN graph using a \emph{sliding window} technique,
used in linear- and sublinear-time force-directed graph drawing algorithms~\cite{gove2019random,meidiana2020sublinear}.

Algorithm \ref{alg:sublc} presents the details of the GradSample subroutine. 
Initially, the array of edges of the $k$NN graph $E_k$ is shuffled to introduce randomness. 
In each iteration, a sliding window of size $n^{0.9}$ is defined, starting at an index $i_{start}$. The exponent 0.9 has been chosen based on preliminary experiments that show a window size of $O(n^{0.9})$ provided a good runtime-quality trade-off compared to lower exponent sizes. 
Instead of computing $w_{ij}$ and subsequently the gradient over all edges in the $k$NN, we only compute the gradient for edges with the index $(i_{start} + l) \% n$ for $l = 1, ..., floor(n^{0.9})$ (lines \ref{line:windowstart}-\ref{line:windowend}). 
After the iteration is finished, set the new starting index as $(i_{start} + floor(n^{0.9})) \% n$. This process ensures that all edges in $E_k$ are eventually considered in the gradient descent computation, while keeping the runtime of each individual iteration low.

\begin{theorem} \label{theorem:slgumap}
    SL-GUMAP runs in $O(n)$ time (C0: $O(n)$ time, C1: $O(n)$ time, C2: $O(n^{0.9})$ time) with $O(n + m)$ space complexity.
    
\end{theorem}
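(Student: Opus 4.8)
The plan is to bound the three components (C0, C1, C2) separately and then observe that they sum to $O(n)$, treating the neighborhood size $k$ and the iteration count $t$ as constants with respect to $n$. For the combined C0/C1 step, realized by \texttt{PartialBFS}, I would first note that the outer loop runs exactly $n$ times, once per start vertex $v$, and that each partial traversal halts as soon as the $(k+1)$-st vertex enters \texttt{visited}, enforced by the \texttt{break} once \texttt{visited.size} exceeds $k$. The bookkeeping inside a single traversal --- writing the $O(k)$ finite entries of row $S[v][\cdot]$ and pushing the corresponding $k$ edges onto $E_k$ --- is $O(k)$, and since these $k$ visited vertices \emph{are} the $k$ nearest neighbors of $v$, the $k$NN graph falls out of the same traversal and C1 carries no cost beyond C0. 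Multiplying per-vertex work by $n$ and setting $k = O(1)$ yields $O(nk) = O(n)$ for both C0 and C1.

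For C2, carried out by \texttt{GradSample}, I would first record that the $k$NN edge set has size $|E_k| = O(kn) = O(n)$, so the window size is $s = |E_k|^{0.9} = O(n^{0.9})$. Each of the $t$ outer iterations touches exactly $s$ edges and performs $O(1)$ work per edge to evaluate $w_{ab}$, compute the gradient, and update $X_a, X_b$; hence one iteration costs $O(n^{0.9})$, and with $t = O(1)$ the loop is $O(t \cdot n^{0.9}) = O(n^{0.9})$. Summing the three parts gives $O(n) + O(n) + O(n^{0.9}) = O(n)$ overall.

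For the space bound I would account for four contributions: the input graph at $O(n+m)$; the distance matrix $S$, which although declared as an $n \times n$ array has only $k$ finite entries per row, so a sparse representation costs $O(nk) = O(n)$ rather than $O(n^2)$; the $k$NN graph at $O(nk) = O(n)$ edges; and the coordinate array $X$ together with the transient $w_{ij}$ values at $O(n)$. These total $O(n+m)$.

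I expect the main obstacle to be the genuine per-traversal cost of \texttt{PartialBFS}: the neighbor-expansion inner loop scans every edge incident to each visited vertex, so one traversal actually costs $\sum_{u \in \text{visited}} \deg(u)$, which collapses to $O(k)$ only under a bounded-degree (sparse-graph) assumption or with an early-stopping refinement that halts expansion once $k$ unvisited candidates are found. I would therefore state this assumption explicitly, consistent with the paper's ``$O(kn)$ in practice'' phrasing; the C2 bound and the space accounting are then routine.
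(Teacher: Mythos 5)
Your proposal is correct and follows essentially the same route as the paper's proof: component-wise bounds with $k$ and $t$ treated as $O(1)$, the partial BFS yielding $O(n)$ time for C0 and C1 simultaneously since the $k$ visited vertices already form the $k$NN, the sliding window of size $|E_k|^{0.9} = O(n^{0.9})$ for C2, and the same $O(n+m)$ space accounting. You are in fact more careful than the paper on two points it glosses over --- the genuine per-traversal cost of \texttt{PartialBFS} is $\sum_{u \in \mathrm{visited}} \deg(u)$ rather than $O(k)$ unless degrees are bounded or expansion is truncated (the paper itself only claims ``$O(kn)$ time in practice''), and the $n \times n$ matrix $S$, declared dense in Algorithm 3, must be stored sparsely (only $k$ finite entries per row) for the stated $O(n+m)$ space bound to hold.
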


\begin{proof}
The partial BFS runs in $O(n)$ time, as for each vertex, the traversal is stopped after a constant number of other vertices are visited, reducing C0 and C1 to $O(n)$ time. GradSample for C2 runs in $O(n^{0.9})$ time, due to each iteration selecting $O(n^{0.9})$ vertices to compute the gradient on. With the partial BFS, shortest path computation only need $O(n)$ space, with $v_{ij}$ values for the gradient descent step taking $O(n)$ space, for a total of $O(n + m)$ space complexity when combined with the $O(n + m)$ space required for the input graph.

\end{proof}

\subsection{SSSL-GUMAP}

Finally, we present SSSL-GUMAP, which combines the approaches of SS-GUMAP and SL-GUMAP. 
As with SS-GUMAP, as pre-processing, we compute the spectral sparsification $G'$ of $G$. 
We then use $G'$ as the input for SL-GUMAP to compute the coordinates $X$ of the vertices in $V$ in the drawing. 
Finally, as with SS-GUMAP, we add all the vertices in $E$ to the drawing to obtain a drawing of the original graph $G$. 
Algorithm \ref{alg:lumapss} shows a high-level overview of SSSL-GUMAP.

\begin{theorem} \label{theorem:ssslgumap}
    SSSL-GUMAP runs in $O(n)$ time (C0: $O(n)$ time, C1: $O(n)$ time, C2: $O(n^{0.9})$ time) with $O(n + m)$ space complexity.
   
\end{theorem}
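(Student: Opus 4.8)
The plan is to derive this result as essentially a corollary of Theorem \ref{theorem:slgumap}, observing that SSSL-GUMAP is precisely SL-GUMAP applied to the spectral sparsification $G'$ rather than to $G$. First I would note that $G'=(V,E')$ has the same vertex set $V$ as $G$, since spectral sparsification only discards edges; hence $n=|V|$ is unchanged, and every complexity bound in Theorem \ref{theorem:slgumap} that is stated in terms of $n$ carries over verbatim. In particular, the PartialBFS subroutine run on $G'$ terminates as soon as $k=O(1)$ neighbors of each source vertex have been visited, so C0 and C1 together remain $O(n)$ time; and GradSample operates on a $k$NN graph that again has $O(kn)=O(n)$ edges, so each of its $t$ iterations samples a window of size $s=|E_k|^{0.9}=O(n^{0.9})$, giving C2 in $O(n^{0.9})$ time. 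The overall bound is therefore $O(n)$, matching SL-GUMAP.

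Next I would address the two steps that distinguish SSSL-GUMAP from a bare invocation of SL-GUMAP: the spectral sparsification pre-processing and the final edge re-insertion. Following the convention established for SS-GUMAP (Theorem \ref{theorem:ssgumap}), the sparsification $G'$ is treated as pre-processing supplied to the core algorithm, exactly as it is taken as input there; the effective-resistance computation and maximum-spanning-tree augmentation that produce $G'$ run in near-linear $O(m\log n)$ time and are accounted for separately from the C0--C2 pipeline. The final step of adding the edges of $E$ back into the drawing $D'$ to obtain a drawing $D$ of $G$ is $O(m)$, but this falls under the inherent $O(n+m)$-time input/output component that the excerpt explicitly omits from the per-algorithm analysis. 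For space, the input graph requires $O(n+m)$, the partial distance matrix and $k$NN structures require $O(n)$, and the gradient values require $O(n)$, for a total of $O(n+m)$, again as in Theorem \ref{theorem:slgumap}.

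The main thing to be careful about is not a deep technical obstacle but the bookkeeping around the sparsification cost: one must make explicit that the $O(m\log n)$ sparsification is pre-processing and hence excluded from the claimed $O(n)$ figure, consistent with how $G'$ is presented as an input to SS-GUMAP. A secondary point worth stating cleanly is why running on $G'$ cannot be asymptotically slower than running on $G$: since $E'\subseteq E$ with $|E'|=O(n\log n)$, every edge-dependent cost in SL-GUMAP is no larger on $G'$ than on $G$, which also justifies the paper's remark that SSSL-GUMAP is faster than SL-GUMAP in execution while sharing the same asymptotic bound.
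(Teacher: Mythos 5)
Your proposal is correct and follows essentially the same route as the paper's proof, which likewise reduces Theorem~\ref{theorem:ssslgumap} to the SL-GUMAP analysis (partial BFS gives $O(n)$ for C0/C1, GradSample gives $O(n^{0.9})$ for C2, and $O(n+m)$ space as in Theorem~\ref{theorem:slgumap}). Your only additions---explicitly classifying the $O(m\log n)$ sparsification as pre-processing supplied as input, and the edge re-insertion as part of the omitted $O(n+m)$ I/O component---are bookkeeping the paper leaves implicit via its algorithm statement and its general remark on input/output costs, so they sharpen rather than alter the argument.
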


\begin{proof}

    The partial BFS for C0 and C1 runs in $O(n)$ time, the same as SL-GUMAP. GradSample for C2 runs in $O(n^{0.9})$ time, due to the sampling done in each iteration. As with SL-GUMAP, the partial BFS takes $O(n)$ space and $v_{ij}$ values for the gradient descent step tasks $O(n)$ space, for a total of $O(n + m)$ space, including the $O(n + m)$ space required for the input graph.
    
\end{proof}

\begin{algorithm}
 \caption{\textbf{SSSL-GUMAP}}
 \begin{algorithmic}[1]
 \State \textbf{Input:} Graph $G=(V,E)$, spectral sparsification $G' = (V,E')$ of $G$, neighborhood size $k$, sample size $samp$, \# of iterations $t$

 \State // C0 and C1
 \State $S,G_k$: PartialBFS($G$, $k$) // partial distance matrix and $k$NN graph

 \State // C2
 \State $X:$ Coordinates of the vertices in the drawing
 \State Initialise the coordinates in $X$ using spectral layout
 \State GradSample($G_k$, $X$, $samp$, $t$)

 \State $D'$: Drawing of $G'$ with $X$ as coordinates
 \State $D$: $D'$ with all edges in $G$ added in
 
 \State  \textbf{return} $D$
 \end{algorithmic}
 \label{alg:lumapss}
\end{algorithm}

By combining both the spectral sparsification approach of SS-GUMAP and the partial BFS and sublinear-time cost optimization computation from SL-GUMAP, we expect SSSL-GUMAP to run even faster in practice, while maintaining a reasonable level of quality.

\section{Comparison Experiments: SS-GUMAP, SL-GUMAP, SSSL-GUMAP vs GUMAP}
\label{sec:eval}

\subsection{Experiment Design}

After validating the effectiveness of utilizing UMAP for graph drawing, we now present experiments evaluating the effectiveness and efficiency of our fast UMAP-based graph drawing algorithms. We perform three experiments comparing our algorithms, SS-GUMAP, SL-GUMAP, and SSSL-GUMAP, to the baseline GUMAP. We aim to demonstrate that our algorithms run faster than the baseline GUMAP, without compromising quality with unfavorable runtime-quality trade-offs.  

\subsubsection{Implementation and Parameters}

We implement our algorithms in Python, based on the original implementation of McInnes et al.~\cite {mcinnes2018umap}, combined with NetworkX~\cite{hagberg2008exploring} for shortest path computation. For spectral sparsification of SS-GUMAP and SSSL-GUMAP, we use the implementation of GSparse~\cite{gsparse}. We use the spectral layout for the initialization, as commonly used with UMAP for DR.

For all algorithms, we use $d=2$ (i.e., two-dimensional drawing) and $k=15$ for the $k$NN graph neighbourhood size based on the default parameters, i.e., both $d$ and $k$ are always $O(1)$ with respect to $n$. For each graph, we run each algorithm five times and take the average runtime and quality metrics over all five runs. The algorithms are run on a Linux box with Intel CORE i7  and  16GB RAM.

We expect that all of our algorithms will run faster than GUMAP. 
Specifically, among our algorithms, we expect SSSL-GUMAP to run the fastest, followed by SL-GUMAP and SS-GUMAP. 
Quality metric-wise, we expect SS-GUMAP to obtain the highest metrics, followed by SL-GUMAP and SSSL-GUMAP.

\begin{table}[h]
    \centering
    \caption{Data sets for experiments} 
    \subfloat[Benchmark scale-free graphs]{
    \begin{tabular}{|l|c|c|c|}
    \hline
        graph & $|V|$ & $|E|$ & density \\ \hline
        G\_13\_0 & 1647 & 6487  & 3.94 \\ \hline
        soc\_h & 2000 & 16097  & 8.05 \\ \hline
        Block\_2000 & 2000 & 9912  & 4.96 \\ \hline
        G\_4\_0 & 2075 & 4769 & 2.30 \\ \hline
        oflights & 2905 & 15645 & 5.39 \\ \hline
        tvcg & 3213 & 10140  & 3.16 \\ \hline
        facebook & 4039 & 88234  & 21.8 \\ \hline
        CA-GrQc & 4158 & 13422  & 3.23 \\ \hline
        EVA & 4475 & 4652 & 1.04 \\ \hline
        us\_powergrid & 4941 & 6594  & 1.33 \\ \hline
        as19990606 & 5188 & 9930  & 1.91 \\ \hline
        migrations & 6025 & 9378  & 1.57 \\ \hline
        lastfm\_asia & 7624 & 27906  & 3.65\\ \hline
        CA-HepTh & 8638 & 24827  & 2.87 \\ \hline
        CA-HepPh & 11204 & 117649  & 10.5 \\ \hline
    \end{tabular}
    }
    \\
    \subfloat[GION graphs]{
    \begin{tabular}{|l|c|c|c|}
    \hline
        graph & $|V|$ & $|E|$ &  density \\ \hline
        GION\_2 & 1159 & 6424 &  5.54 \\ \hline
        GION\_5 & 1748 & 13957 &  7.98 \\ \hline
        GION\_6 & 1785 & 20459 & 11.5 \\ \hline
        GION\_7 & 3010 & 41757 &  13.9 \\ \hline
        GION\_8 & 4924 & 52502 & 10.7 \\ \hline
        GION\_1 & 5452 & 118404 &  21.7 \\ \hline
        GION\_4 & 5953 & 186279 & 31.3\\ \hline
        GION\_3 & 7885 & 427406 & 54.2\\ \hline
    \end{tabular}
    }
    \\
    \subfloat[Mesh graphs]{
    \begin{tabular}{|l|c|c|c|}
    \hline
        graph & $|V|$ & $|E|$ & density \\ \hline
        cage8 & 1015 & 4994 & 4.92 \\ \hline
        bcsstk09 & 1083 & 8677 &  8.01 \\ \hline
        nasa1824 & 1824 & 18692 &  10.2 \\ \hline
        plat1919 & 1919 & 15240 & 7.94 \\ \hline
        sierpinski3d & 2050 & 6144 &3.00 \\ \hline
        3elt & 4720 & 13722 & 2.91 \\ \hline
        crack & 10240 & 30380 &  2.97 \\ \hline
    \end{tabular}
    }
    \label{table:data}
   \vspace{-2mm}
\end{table}

\subsubsection{Data sets}

We use a wide selection of graphs taken from data sets commonly used to evaluate graph drawing algorithms, including DR-based graph drawing algorithms~\cite{brandes2006eigensolver,kruiger2017graph}; we use the largest connected component of each graph. 
We divide the data used into three sets, to demonstrate the wide applicability of the algorithms: (1) benchmark \emph{real-world scale-free} graphs, with a globally sparse and locally dense structure~\cite{snapnets}; (2) \emph{GION} graphs, taken from biochemical RNA networks with large diameters~\cite{marner2014gion}; and (3) \emph{mesh} graphs with regular grid-like structures~\cite{davis2011university}; see Table \ref{table:data}.

\subsubsection{Quality Metrics}

We compare GUMAP and tsNET on runtime and four graph drawing quality metrics widely used in graph drawing literature: neighborhood preservation, edge crossing, shape-based metrics, and stress.
Edge crossing is one of the most popular {\em readability} metrics for graph drawing~\cite{battista1998graph}.

\emph{Neighborhood preservation}, often used for evaluating DR algorithms~\cite{espadoto2019toward}, measures how well the neighborhood $N_G(v, r)$, i.e., the set of vertices in $G = (V,E)$ with a shortest path distance at most $r$ from a vertex $v \in V$ (we set $r=2$, same as~\cite{kruiger2017graph}) 
is represented by the \emph{geometric} neighborhood $N_D(v, r)$ in a drawing $D$ of $G$ 
(i.e., the $|N_G(v, r)|$-closest neighbors of $v$ in $D$)~\cite{martins2015explaining}, computed as: 
$
    \sum_{v \in V} \frac{|N_G(v,r) \cap N_D(v,r)|}{|N_G(v,r) \cup N_D(v,r)|}.
$

\emph{Shape-based metrics}~\cite{eades2017shape} are designed to evaluate the quality of large graph drawing. They measure how faithfully the ``shape'' of the drawing, computed using the \emph{proximity graph}~\cite{toussaint1986computational} $P$ of the vertex point set, represents the ground truth structure of the graph $G$. 
The metric is computed using the Jaccard similarity of $G$ and $P$.
We use $dRNG$ (Degree-sensitive Relative Neighborhood Graph)~\cite{hong2022dgg} as the proximity graph, which computes more accurate shape-based metrics than the RNG proximity graph~\cite{eades2017shape}.

{\em Stress}, which measures how proportionally the geometric distances in the drawing represent the ground truth shortest path distances between vertices, is a widely used optimization criterion in both graph drawing~\cite{battista1998graph} and dimension reduction~\cite{espadoto2019toward}. 
Specifically, we use the aggregated stress formula~\cite{kruiger2017graph}:
$
 \frac{1}{n^2-n}\sum_{i, j \in |V|} \left ( \frac{d(v_i,v_j) - ||X_i - X_j||}{(d(v_i,v_j)} \right )^2.  $

For neighborhood preservation and shape-based metrics, a higher value is better, while for edge crossing and stress, a lower value is better.

We compute the {\em runtime improvement}  between GUMAP and our algorithms as the absolute difference between the two divided by the value for GUMAP.
For example, the runtime improvement by SL-GUMAP over GUMAP is computed as $\frac{time(G\mhyphen UMAP)-time(SL\mhyphen G\mhyphen UMAP)}{time(G\mhyphen UMAP)}$.
We define the difference in metrics where a lower value is better (stress, edge crossing) similarly.
For metrics where a higher value is better (neighborhood preservation, shape-based), we reverse the numerator; for example, for neighborhood preservation of SL-GUMAP compared to UMAP, we use the formula $\frac{np(SL\mhyphen G\mhyphen UMAP)-np(G\mhyphen UMAP)}{np(G\mhyphen UMAP)}$.

\begin{figure}[t]
    \centering
    \subfloat[Scale-free]{
        \includegraphics[width=0.99\columnwidth]{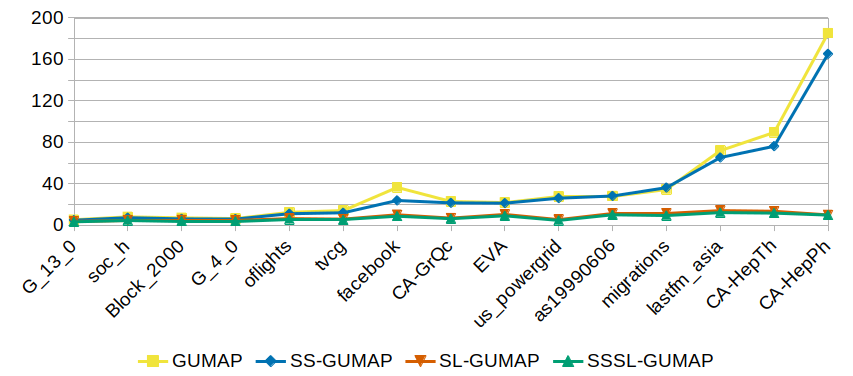}
        \label{fig:runtime_bench}
    }
    \qquad
    \subfloat[GION]{
        \includegraphics[width=0.9\columnwidth]{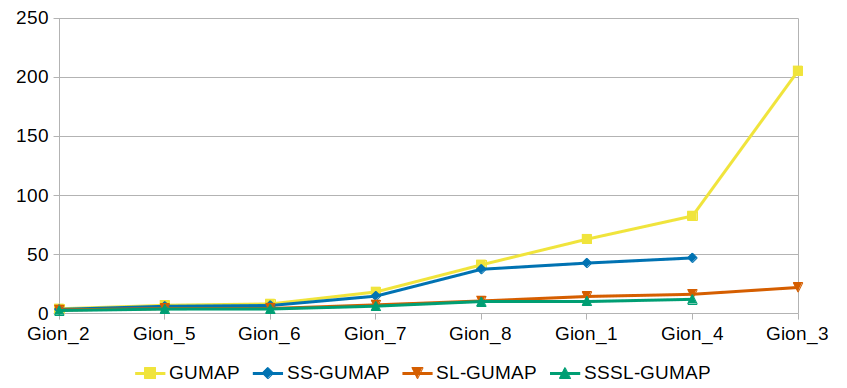}
        \label{fig:runtime_gion}
    }
    \qquad
    \subfloat[Mesh]{
        \includegraphics[width=0.9\columnwidth]{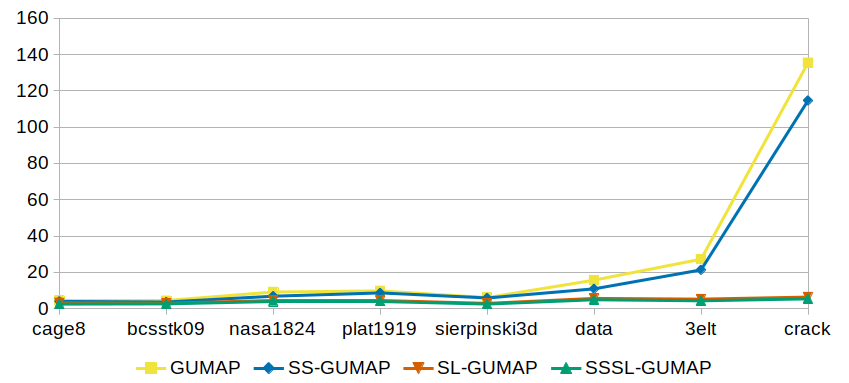}
        }
        \caption{Runtime comparison (in seconds): SS-GUMAP runs faster than GUMAP, and both SL-GUMAP and SSSL-GUMAP run significantly faster than GUMAP.}
    \label{fig:runtime_all}
    \vspace{-2mm}
\end{figure}

\subsection{SS-GUMAP Experiments}
\label{sec:umapss_exp}

In this experiment, we compare SS-GUMAP with GUMAP. 
For each graph $G$, we compute a spectral sparsification $G'$ with $O(n \log n)$ edges by selecting $O(n \log n)$ edges in decreasing order of effective resistance value. 
We then run UMAP on $G'$, and finally add back the edges removed in the sparsification to obtain a drawing of $G$. 
As spectral sparsification reduces the size of the graph, we expect SS-GUMAP to run faster than GUMAP. Furthermore, as spectral sparsification retains the most important edges, we expect SS-GUMAP to compute drawings with similar quality to GUMAP. We hypothesize the performance of SS-GUMAP:

\begin{hyp} \label{hyp:ss_runtime}
    SS-GUMAP runs significantly faster than GUMAP.
\end{hyp}

\begin{hyp} \label{hyp:ss_qual}
    SS-GUMAP computes similar quality drawings to GUMAP.
\end{hyp}

\subsubsection{Runtime}

Figure \ref{fig:runtime_all} shows the runtime of GUMAP in seconds in yellow, with the runtime of SS-GUMAP in blue.
On average, SS-GUMAP runs 28\% faster than GUMAP, supporting Hypothesis \ref{hyp:ss_runtime}. 
The largest runtime improvement is seen especially on larger, denser GION graphs such as GION\_4.

\subsubsection{Quality Metrics}
Figure \ref{fig:np_all} shows the neighborhood preservation, with GUMAP in yellow and SS-GUMAP in blue. 
On average, SS-GUMAP computes drawings with neighborhood preservation similar to drawings computed by GUMAP, at only 5\% lower, supporting Hypothesis \ref{hyp:ss_qual}. 
Note that the 28\% runtime gain is much higher than the 5\% quality loss.

In particular, for scale-free graphs, on average, SS-GUMAP obtains neighborhood preservation very similar to GUMAP, with only a 2\% difference. 
On GION and mesh graphs, the difference is still much lower, at only 7\% and 5\% respectively, compared to the 28\% runtime improvement.

\begin{figure}[t]
    \centering
    \subfloat[Scale-free]{
        \includegraphics[width=0.99\columnwidth]{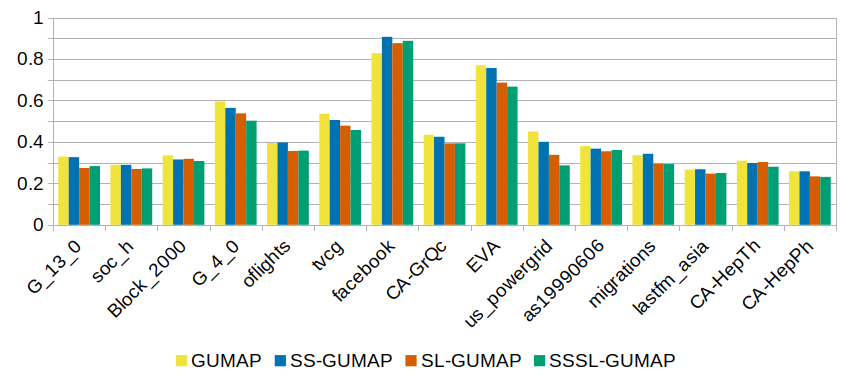}
        \label{fig:np_bench}
    }
    \qquad
    \subfloat[GION]{
        \includegraphics[width=0.78\columnwidth]{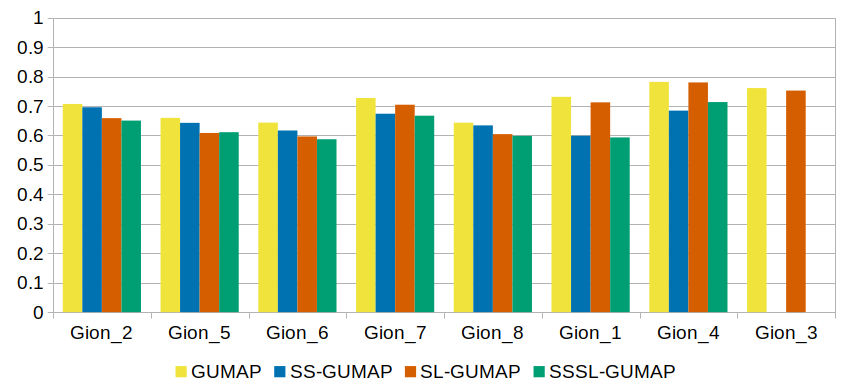}
        \label{fig:np_gion}
    }
    \qquad
    \subfloat[Mesh]{
        \includegraphics[width=0.78\columnwidth]{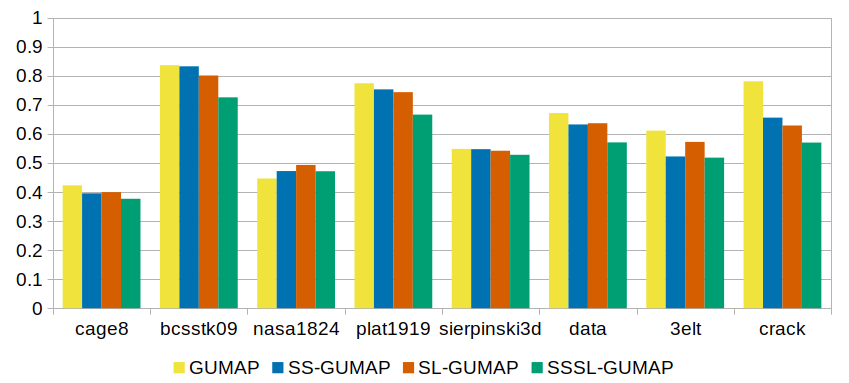}
        \label{fig:np_mesh}
    }
    \qquad
    \caption{Neighborhood preservation comparison: SS-GUMAP obtains very similar neighborhood preservation to GUMAP, while SL-GUMAP and SSSL-GUMAP obtain less than 11\% difference in neighborhood preservation than GUMAP.}
    \label{fig:np_all}
    \vspace{-2mm}
\end{figure}

\begin{figure}[t]
    \centering
    \subfloat[Scale-free]{
        \includegraphics[width=0.99\columnwidth]{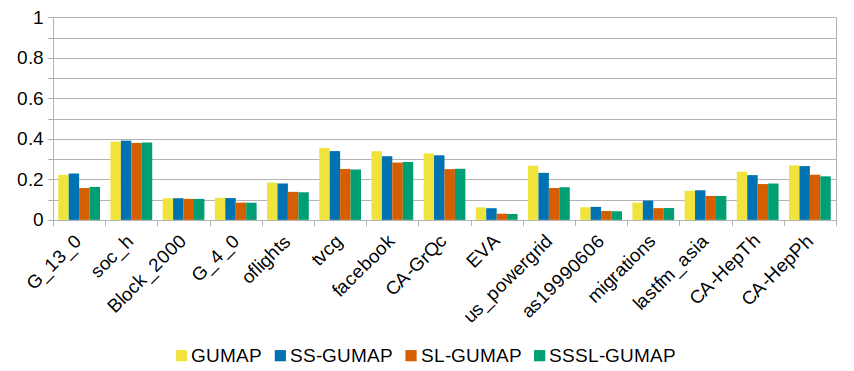}
        \label{fig:shp_bench}
    }
    \qquad
    \subfloat[GION]{
        \includegraphics[width=0.78\columnwidth]{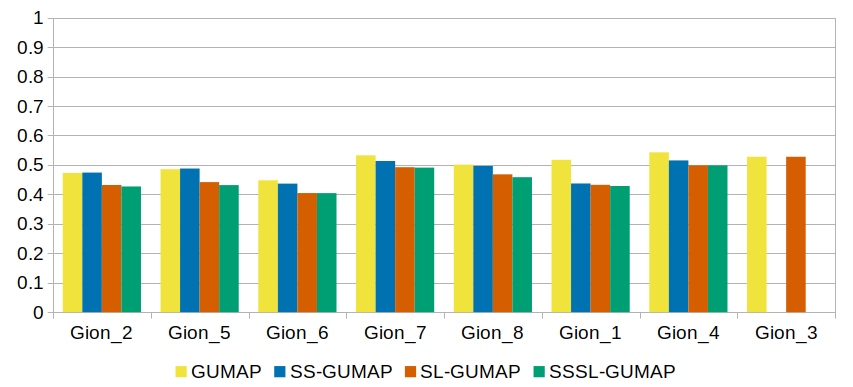}
        \label{fig:shp_gion}
    }
    \qquad
    \subfloat[Mesh]{
        \includegraphics[width=0.78\columnwidth]{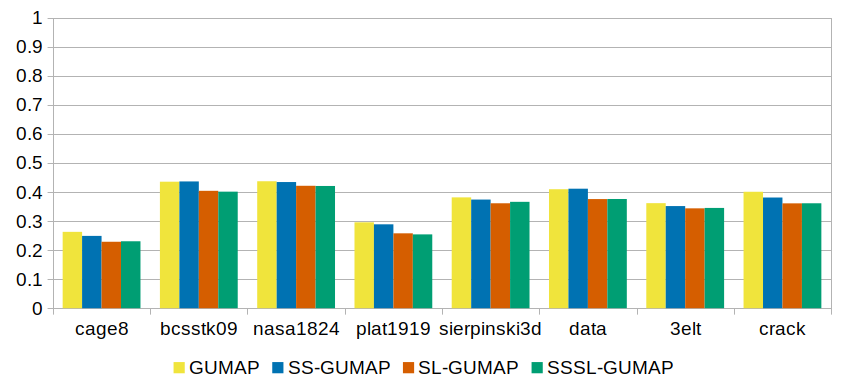}
        \label{fig:shp_mesh}
    }
    \qquad
    \caption{Shape-based metrics: on average, SS-GUMAP obtains the same shape-based metrics as GUMAP, while SL-GUMAP and SSSL-GUMAP obtain only 13\% lower shape-based metrics than GUMAP.}
    \label{fig:shp_all}
    \vspace{-2mm}
\end{figure}

Figure \ref{fig:shp_all} shows the shape-based metrics comparison, with GUMAP in yellow and SS-GUMAP in blue. 
On average, the shape-based metrics of drawings by SS-GUMAP is almost the same as GUMAP, supporting Hypothesis \ref{hyp:ss_qual}, at less than 4\% difference on average over all data sets, much lower than the average 28\% runtime improvement.

\begin{figure}[t]
    \centering
    \subfloat[Scale-free]{
        \includegraphics[width=0.99\columnwidth]{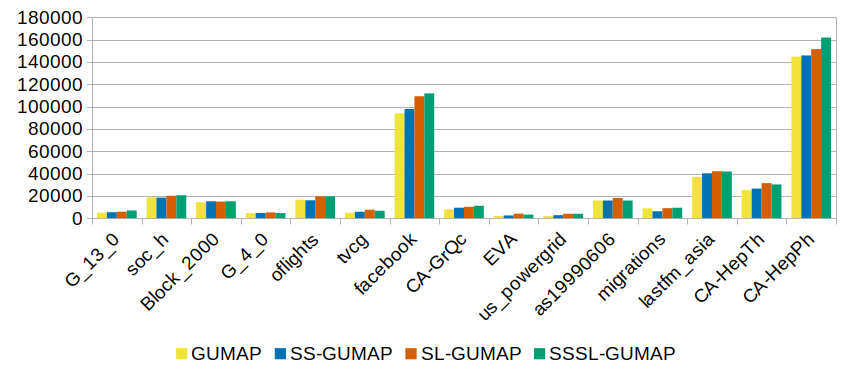}
        \label{fig:crossing_bench}
    }
    \qquad
    \subfloat[GION]{
        \includegraphics[width=0.78\columnwidth]{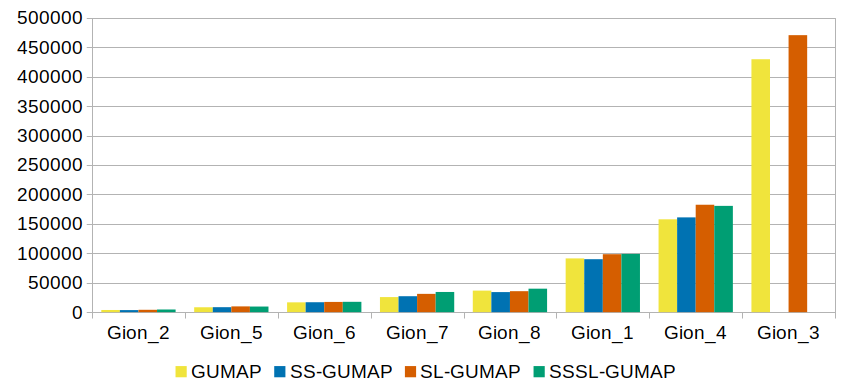}
        \label{fig:crossing_gion}
    }
    \qquad
    \subfloat[Mesh]{
        \includegraphics[width=0.78\columnwidth]{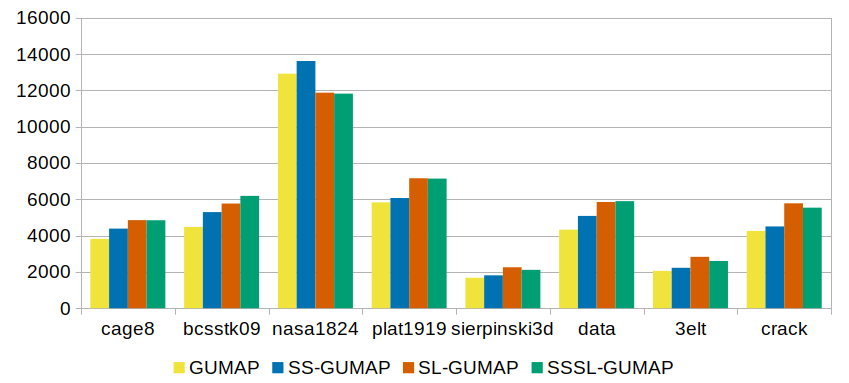}
        \label{fig:crossing_mesh}
    }
    \qquad
    \caption{Edge crossing: on average, SS-GUMAP obtains the same number of edge crossings as GUMAP, while SL-GUMAP and SSSL-GUMAP obtain around 12\% and 15\% more edge crossings than GUMAP, respectively.}
    \label{fig:crossing_all}
    \vspace{-2mm}
\end{figure}

Figure \ref{fig:crossing_all} shows the edge crossing of GUMAP in yellow and SS-GUMAP in blue. 
On average, SS-GUMAP computes drawings with about the same edge crossings as GUMAP, especially for scale-free and GION graphs, supporting Hypothesis \ref{hyp:ss_qual}.
For mesh graphs, on average, drawings by SS-GUMAP have only 9\% higher edge crossings than GUMAP, lower than the 17\% runtime improvement.

\begin{figure}[t]
    \centering
    \centering
    \subfloat[Scale-free]{
        \includegraphics[width=0.99\columnwidth]{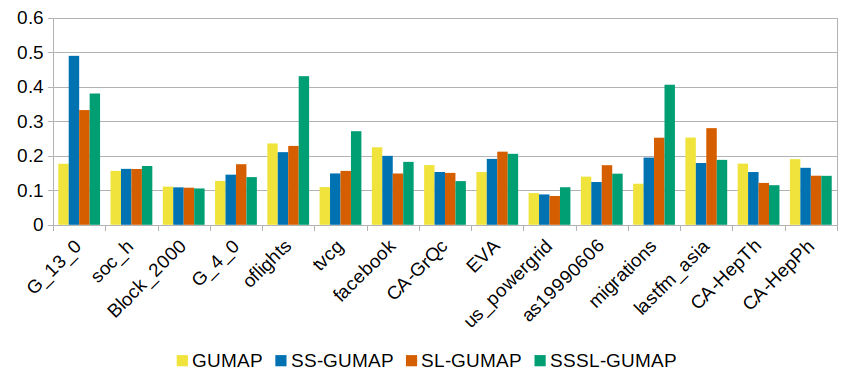}
        \label{fig:stress_bench}
    }
    \qquad
    \subfloat[GION]{
        \includegraphics[width=0.78\columnwidth]{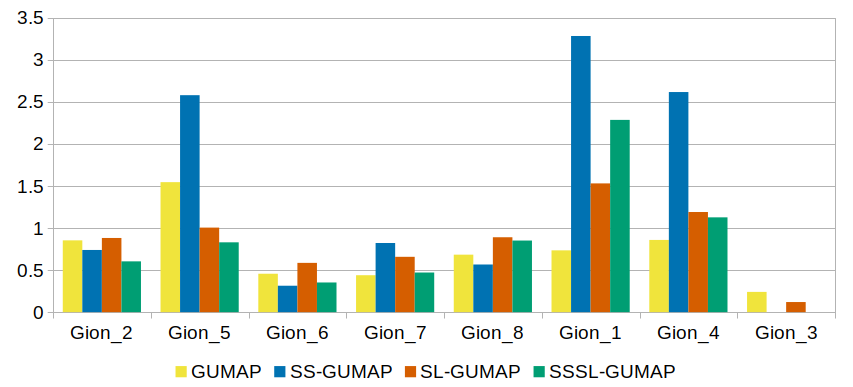}
        \label{fig:stress_gion}
    }
    \qquad
    \subfloat[Mesh]{
        \includegraphics[width=0.78\columnwidth]{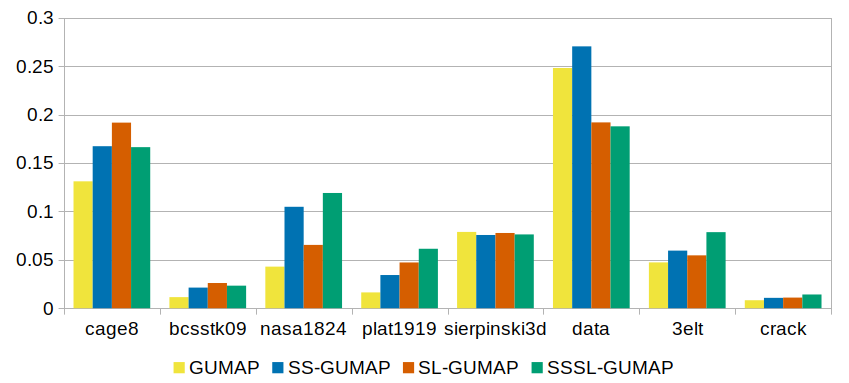}
        \label{fig:stress_mesh}
    }
    \qquad
    \caption{Stress comparison: SSSL-GUMAP obtains almost the same stress as GUMAP,
    excluding two outliers (GION 1 and GION 4), while SL-GUMAP obtains only 5\% higher stress on average. 
    SS-GUMAP obtains around 20\% higher stress than GUMAP on average, still much lower than the runtime improvement.}
    \label{fig:stress_all}
    \vspace{-2mm}
\end{figure}

Figure \ref{fig:stress_all} shows the stress comparison, with GUMAP in yellow and SS-GUMAP in blue, supporting Hypothesis \ref{hyp:sssl_qual} on average. 
For scale-free graphs, SS-GUMAP computes drawings with 11\% higher stress than GUMAP on average,  the same as the 11\% average runtime improvement on scale-free graphs, providing a comparable trade-off between runtime gain and quality loss.
The increase in stress is also mostly similar to the extent of runtime improvement on GION and mesh graphs, at both around 29\% and 18\% respectively, except for a few outliers, GION 1 and GION 4 for GION graphs, and nasa1824 for mesh graphs.

\subsubsection{Visual Comparison}

Table \ref{tab:viscomp} shows visual comparisons of the drawings by all four UMAP-based algorithms, on graphs of different types: scale-free graphs G\_13\_0 and Facebook, GION graphs GION\_6 and GION\_7, and mesh graphs data and plat1919. 
In general, SS-GUMAP produces drawings that are very similar to GUMAP on GION and mesh graphs. 

Furthermore, for some scale-free graphs, SS-GUMAP 
can better untangle the ``hairball'' compared to GUMAP; e.g., graph G\_13\_0 in the first row, where the dense region is expanded more in the drawing by SS-GUMAP, showing the inter-cluster structures more clearly.

\subsubsection{Discussion}

Experimental results support Hypothesis \ref{hyp:ss_runtime}, with SS-GUMAP running on average 28\% faster than GUMAP (see Figure \ref{fig:avg}(a)). 
On quality metrics, Hypothesis \ref{hyp:ss_qual} is supported especially for neighborhood preservation, shape-based metrics, and edge crossing. See Figures \ref{fig:avg}(b)-(d), where the averages on the three metrics are very similar between GUMAP and SS-GUMAP. 

Meanwhile, the higher stress of SS-GUMAP is mostly due to two outliers, GION\_1 and GION\_4: dense graphs with long diameters, forming a ``long'' structure without dense cluster structures. 
The sparsification process introduces artificial clusters, resulting in a drawing with overly-long edges between the artificially-introduced clusters. 

For scale-free graphs such as G\_13\_0 and Facebook in Table \ref{tab:viscomp}, SS-GUMAP is able to still display the clustering structures clearly; accordingly, the stress is only 11\% higher on average. 
Therefore, for stress, Hypothesis \ref{hyp:ss_qual} is most supported for scale-free and mesh graphs.

\subsection{SL-GUMAP Experiments}
\label{sec:exp_lumap}

In this experiment, we evaluate our SL-GUMAP algorithm against GUMAP for drawing a graph $G$. We use $k=15$ as the neighborhood size, and $n^{0.9}$ as the sample size for Step 2, the main gradient descent optimization loop. 
We expect SL-GUMAP to run faster than both GUMAP and SS-GUMAP, while maintaining similar quality to SS-GUMAP:

\begin{hyp} \label{hyp:sl_runtime}
    SL-GUMAP runs significantly faster than SS-GUMAP and GUMAP.
\end{hyp}

\begin{hyp} \label{hyp:sl_qual}
    SL-GUMAP computes similar quality drawings to GUMAP.
\end{hyp}

\subsubsection{Runtime}

On runtime, SL-GUMAP on average obtains 80\% improvement compared to GUMAP, also 70\% faster than SS-GUMAP, supporting Hypothesis \ref{hyp:sl_runtime}. 
As seen in Figure \ref{fig:runtime_all}, with SL-GUMAP shown in red and the graphs sorted in ascending order of vertex set size, the growth in runtime for SL-GUMAP as the size of the graphs grow is much slower than that of GUMAP, due to reducing the runtime of Steps 0-1 to $O(n)$ time and the runtime of Step 2 to sublinear-time.

\subsubsection{Quality Metrics}
On neighborhood preservation, as seen in Figure \ref{fig:np_all} with SL-GUMAP in red, drawings by SL-GUMAP obtain only 6\% lower neighborhood preservation than drawings by GUMAP, supporting Hypothesis \ref{hyp:sl_qual}.
Note that the reduction in neighborhood preservation is much lower than the average 80\% runtime improvement. 
In particular, on GION graphs, the neighborhood preservation of drawings by SL-GUMAP is only 4\% higher than drawings by GUMAP.

On shape-based metrics, SL-GUMAP also computes drawings with only 13\% difference to GUMAP, see Figure \ref{fig:shp_all}.
While the shape-based metrics are slightly lower than SS-GUMAP, the difference is still much smaller compared to the significant 80\% runtime improvement over GUMAP, supporting Hypothesis \ref{hyp:sl_qual}.

On edge crossings, SL-GUMAP computes drawings with only 13\% difference to GUMAP, see Figure \ref{fig:crossing_all}. 
With the relatively much smaller difference compared to the significant 80\% runtime improvement over GUMAP, the results support Hypothesis \ref{hyp:sl_qual}.

On stress, not only does SL-GUMAP compute drawings with stress that is close to GUMAP, it also computes drawings with lower stress than SS-GUMAP, supporting Hypothesis \ref{hyp:sl_qual}. See Figure \ref{fig:stress_all}.
On average, drawings by SL-GUMAP obtain only 16\% higher stress than by GUMAP, still much lower than the 80\% runtime improvement, while also obtaining better stress than SS-GUMAP. 
Moreover, when excluding two outliers, GION 1 and GION 4, the average stress is only 5\% higher than GUMAP.

\subsubsection{Visual Comparison}

As shown in Table \ref{tab:viscomp}, with drawings by SL-GUMAP in the third column, SL-GUMAP can compute drawings with comparable quality to GUMAP for scale-free and GION graphs. 
With scale-free graphs, SL-GUMAP is also able to compute drawings which display the intra-cluster structure better than GUMAP, as seen in the Facebook graph example: the clusters are less overly condensed than GUMAP, showing better intra-cluster structure without excessive distortion of the global inter-cluster structure. 
It can also compute drawings with fewer overly long edges, as with GION\_6 and GION\_7.

\begin{table*}[ht!]
    \centering
    \caption{Visual comparison: In general, SS-GUMAP produces drawings with similar quality to UMAP. Meanwhile, SL-GUMAP and SSSL-GUMAP compute drawings with fewer overly-long edges with better intra-cluster structure, better untangling than GUMAP on scale-free and GION graphs. }
    \begin{tabular}{|c|c|c|c|}
    \hline
    GUMAP & SS-GUMAP & SL-GUMAP & SSSL-GUMAP \\ \hline
    \multicolumn{4}{|c|}{G\_13\_0} \\ \hline
    \includegraphics[width=0.36\columnwidth]{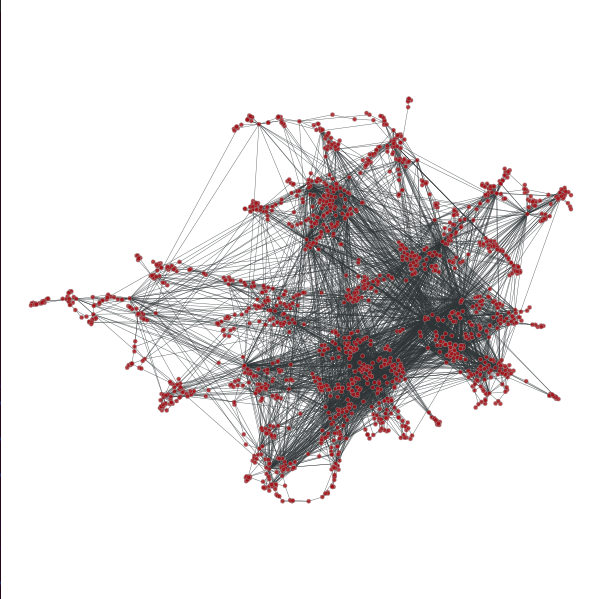} &
    \includegraphics[width=0.36\columnwidth]{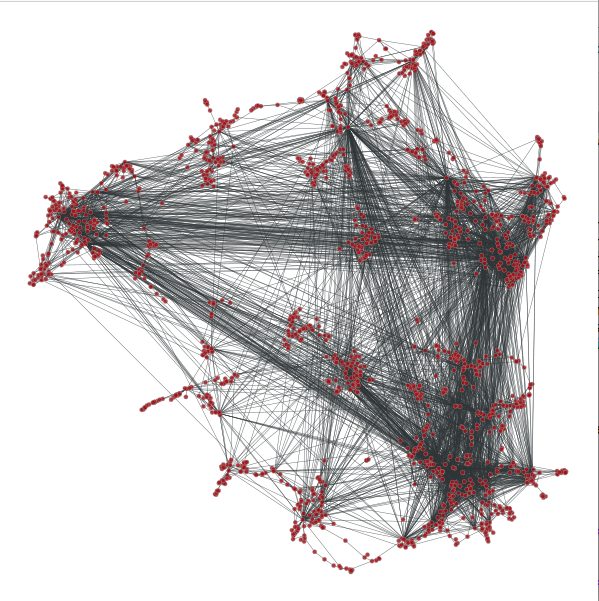} &
    \includegraphics[width=0.36\columnwidth]{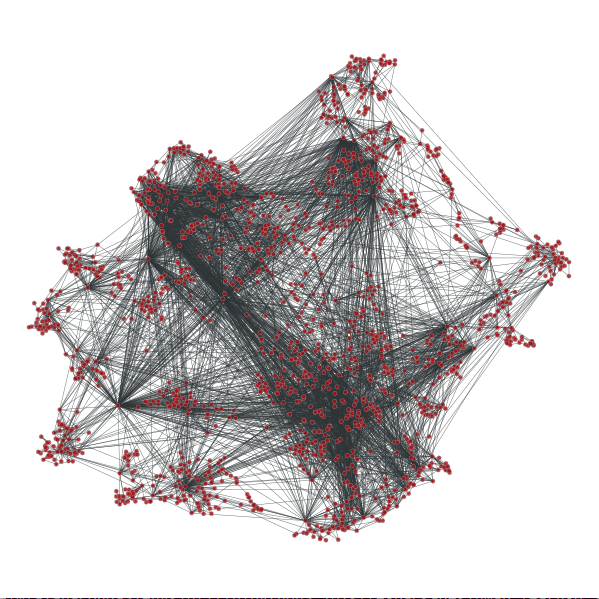} &
    \includegraphics[width=0.36\columnwidth]{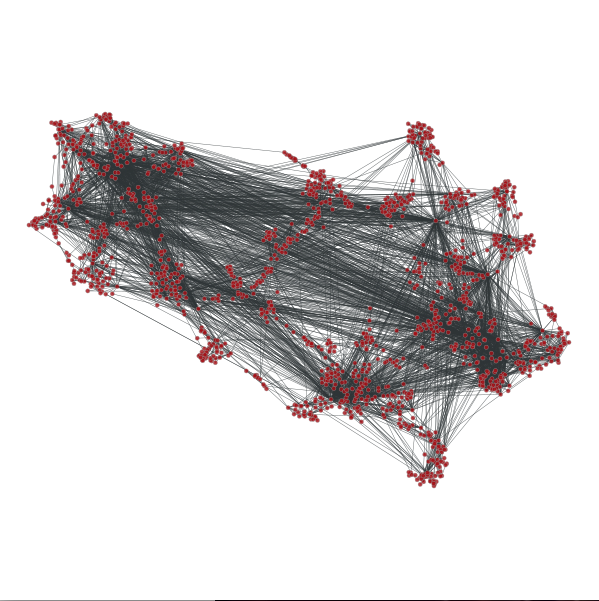} \\ \hline
    \multicolumn{4}{|c|}{Facebook} \\ \hline
    \includegraphics[width=0.36\columnwidth]{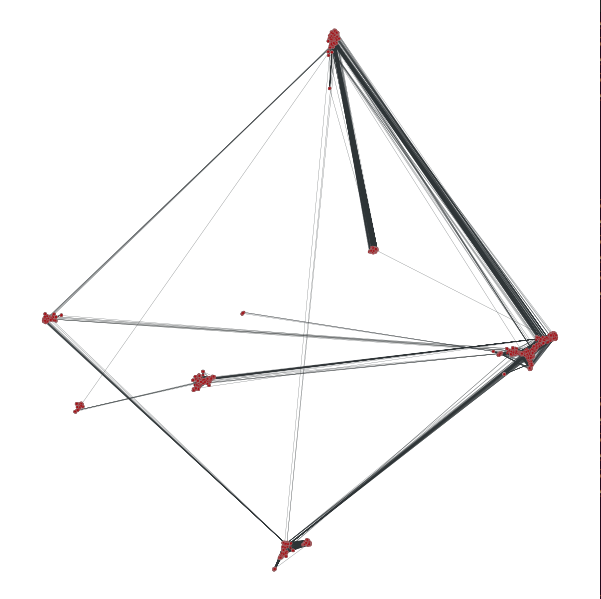} &
    \includegraphics[width=0.36\columnwidth]{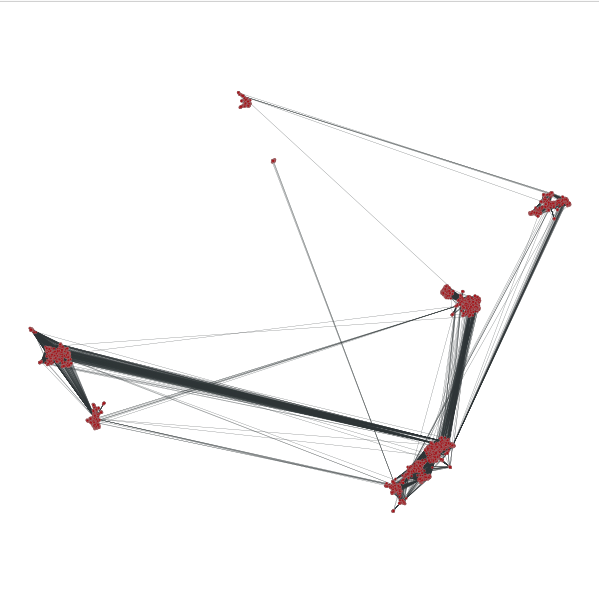} &
    \includegraphics[width=0.36\columnwidth]{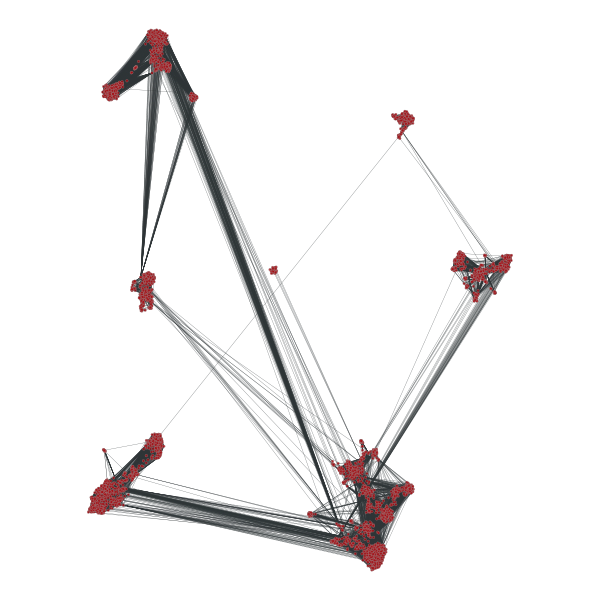} &
    \includegraphics[width=0.36\columnwidth]{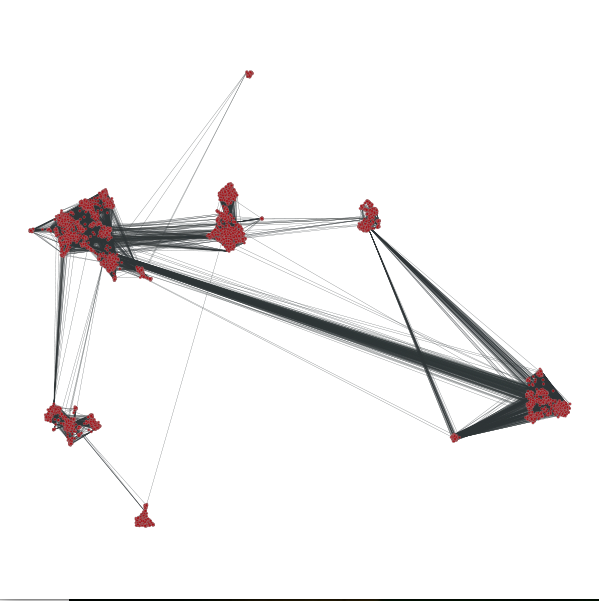} \\ \hline
    \multicolumn{4}{|c|}{GION\_6} \\ \hline
    \includegraphics[width=0.36\columnwidth]{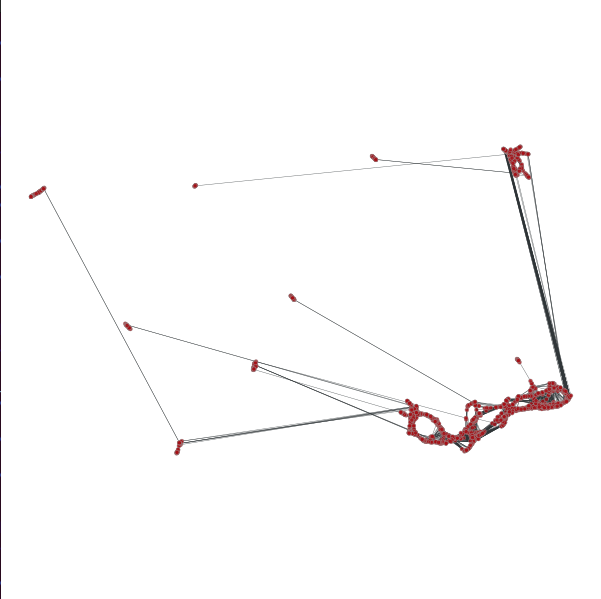} &
    \includegraphics[width=0.36\columnwidth]{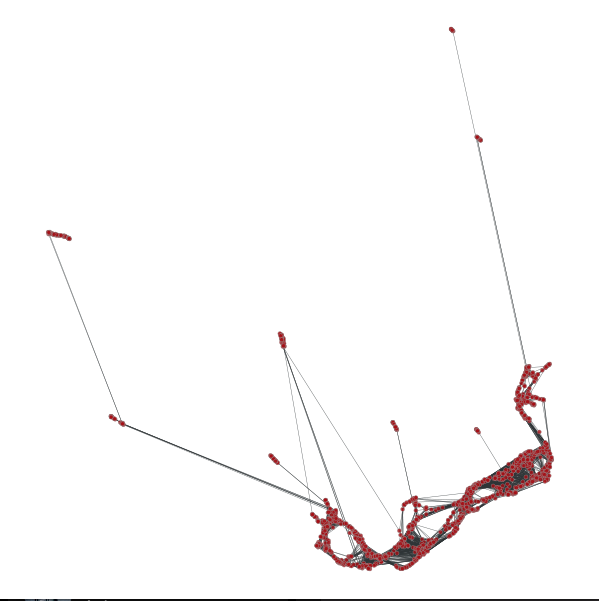} &
    \includegraphics[width=0.36\columnwidth]{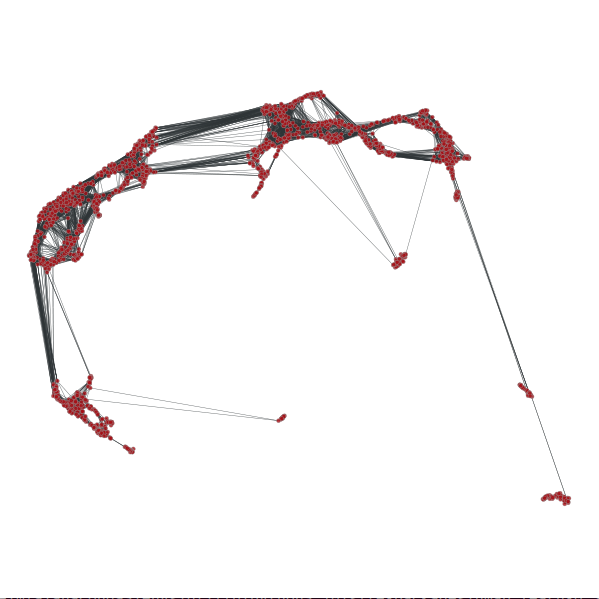} &
    \includegraphics[width=0.36\columnwidth]{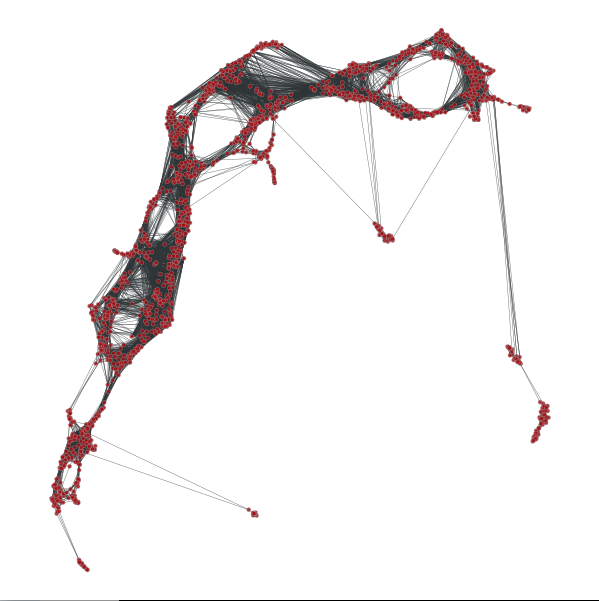} \\ \hline
    \multicolumn{4}{|c|}{GION\_7} \\ \hline
    \includegraphics[width=0.36\columnwidth]{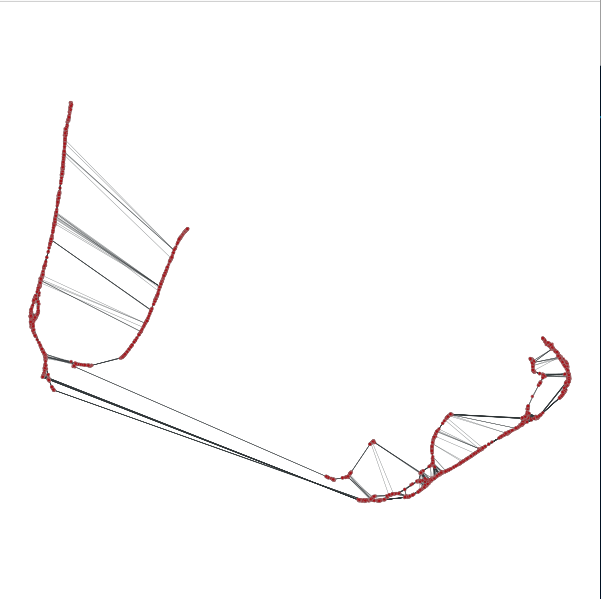} &
    \includegraphics[width=0.36\columnwidth]{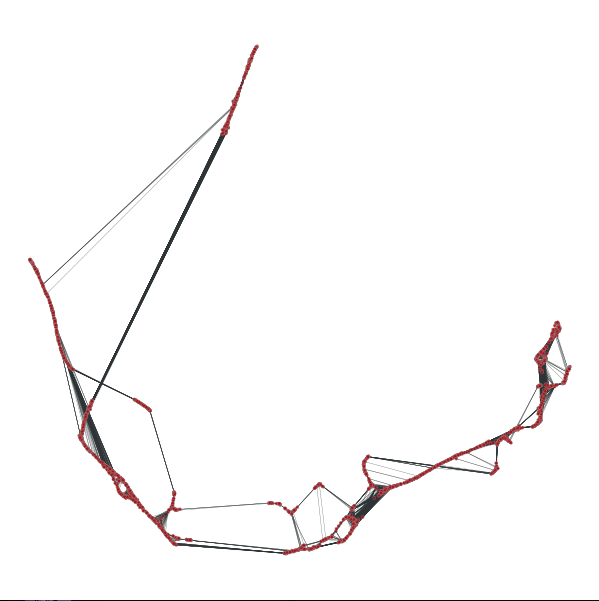} &
    \includegraphics[width=0.36\columnwidth]{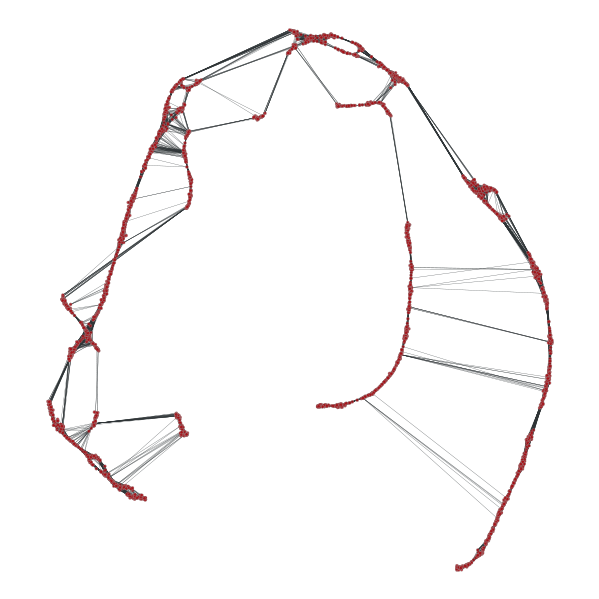}&
    \includegraphics[width=0.36\columnwidth]{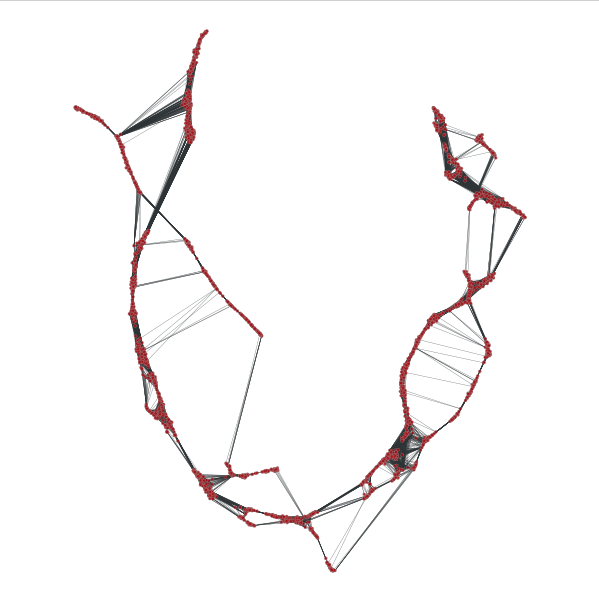} \\ \hline
    \multicolumn4{|c|}{data} \\ \hline
    \includegraphics[width=0.36\columnwidth]{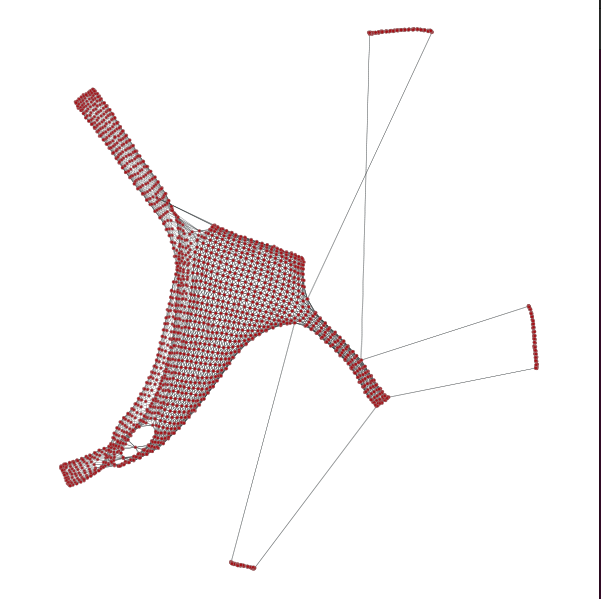} &
    \includegraphics[width=0.36\columnwidth]{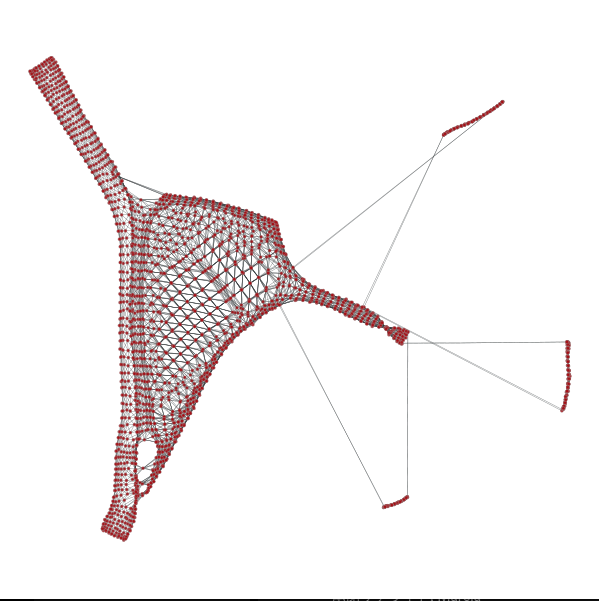} &
    \includegraphics[width=0.36\columnwidth]{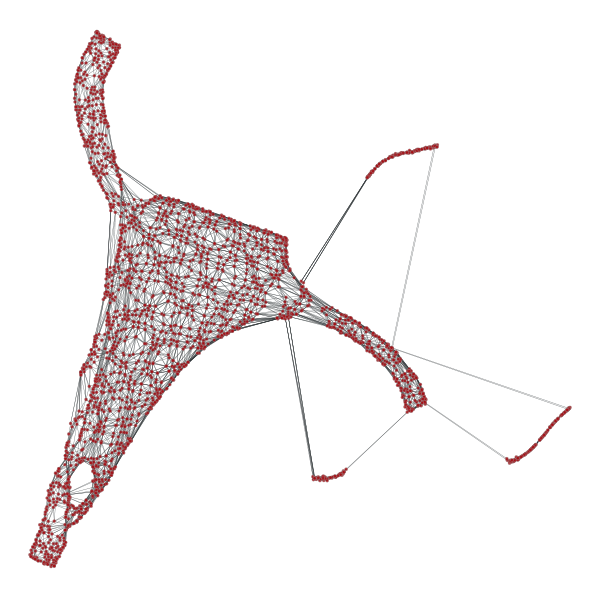} &
    \includegraphics[width=0.36\columnwidth]{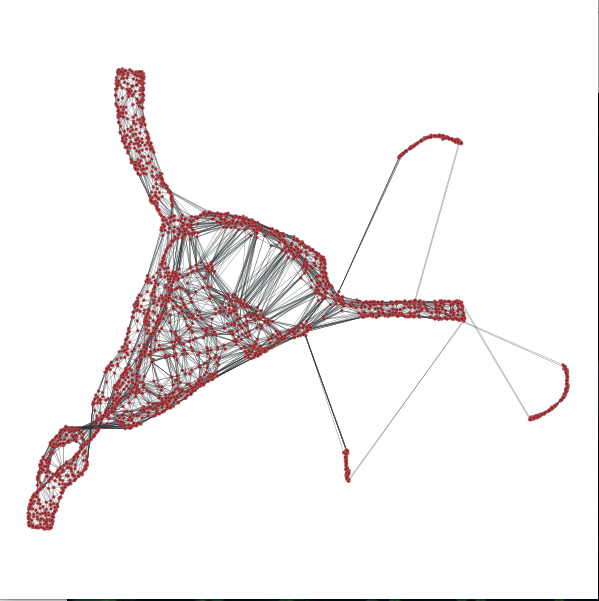} \\ \hline
    \multicolumn4{|c|}{plat1919} \\ \hline
    \includegraphics[width=0.36\columnwidth]{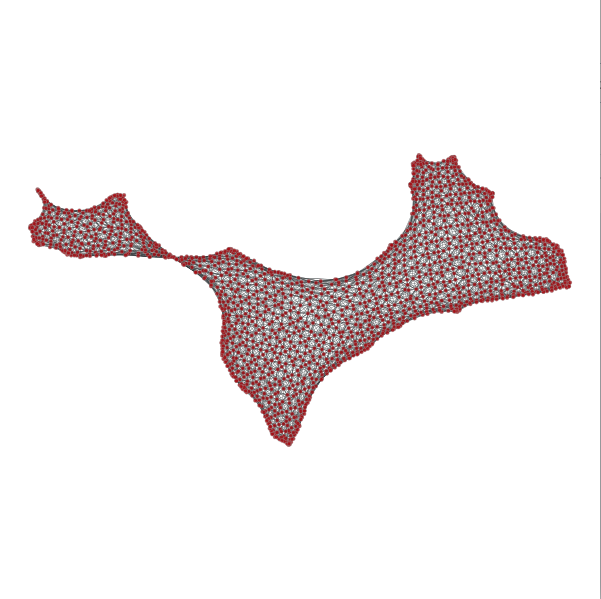} &
    \includegraphics[width=0.36\columnwidth]{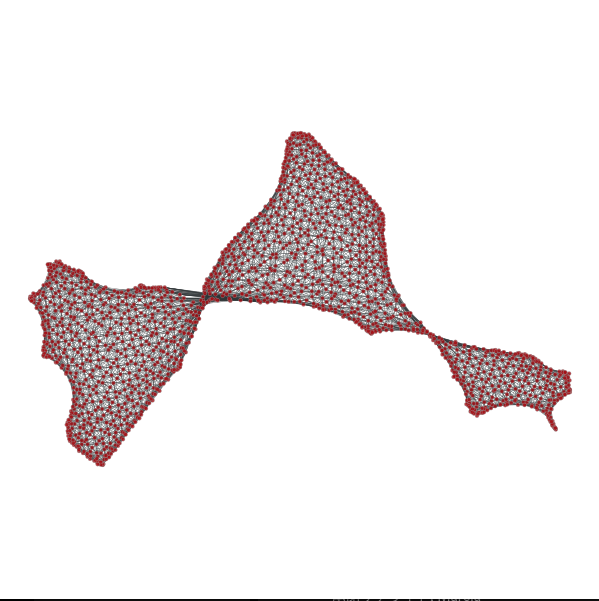} &
    \includegraphics[width=0.36\columnwidth]{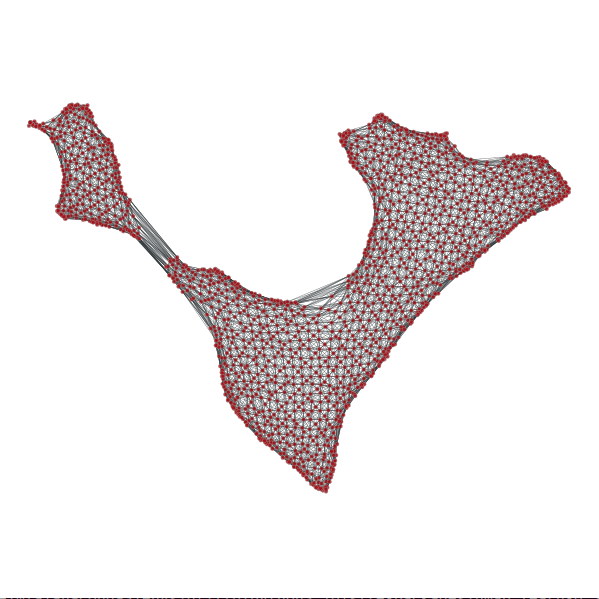} &
    \includegraphics[width=0.36\columnwidth]{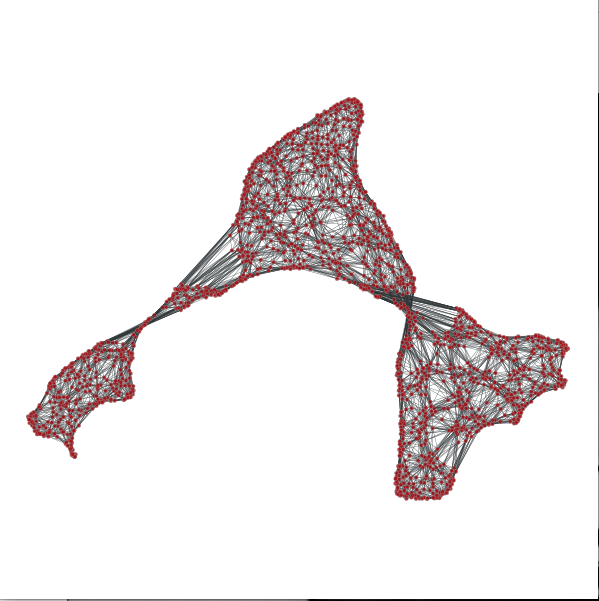} \\ \hline
    \end{tabular}
    \label{tab:viscomp}
\end{table*}

\subsubsection{Discussion}

Experiments support Hypothesis \ref{hyp:sl_runtime}, showing that SL-GUMAP runs significantly faster than GUMAP at 80\% faster on average, see Figure \ref{fig:avg}(a). 
On quality metrics, Hypothesis \ref{hyp:sl_qual} is more supported on neighborhood preservation and stress, with less than 7\% difference to the metrics of drawings by GUMAP, see Figures \ref{fig:avg}(b) and (e), 
much lower than the runtime improvement. 
Even on shape-based metrics and edge crossing, the 13\% difference (see Figure \ref{fig:avg}(c)-(d)) is still much lower than the 80\% runtime improvement, providing a favorable runtime-quality trade-off. 

Surprisingly, SL-GUMAP obtains lower stress on average than SS-GUMAP, especially on GION and mesh graphs. 
From the visual comparison in Table \ref{tab:viscomp}, on some mesh graphs, SS-GUMAP would ``twist'' the global mesh shape that is preserved by SL-GUMAP, while on GION graphs, drawings by SL-GUMAP contain fewer overly long edges compared to SS-GUMAP; these may have led to lower stress for drawings by SL-GUMAP.

\subsection{SSSL-GUMAP Experiments}

In this experiment, we compare UMAP with SSSL-GUMAP. We first compute the spectral sparsification $G'$ of a graph $G$, and then run SL-GUMAP on $G'$ with the same settings as with the experiments in Section \ref{sec:exp_lumap}, $k=15$ as the neighborhood size, and $n^{0.9}$ as the sample size for Step 2. 
We hypothesize the performance of SL-GUMAP:

\begin{hyp} \label{hyp:sssl_runtime}
    SSSL-GUMAP runs faster than SL-GUMAP.
\end{hyp}
\begin{hyp} \label{hyp:sssl_qual}
    SSSL-GUMAP computes similar quality drawings to GUMAP.
\end{hyp}

\subsubsection{Runtime}


SSSL-GUMAP runs the fastest of all our algorithms, on average 83\% faster than GUMAP,  78\% faster than SS-GUMAP, and 23\%  faster than SL-GUMAP, supporting Hypothesis \ref{hyp:sssl_runtime}. 
See Figure \ref{fig:runtime_all}, with SSSL-GUMAP in green.

The largest runtime improvement is seen on the mesh graphs, with 86\% improvement over UMAP on average. 
The growth in runtime as the size of the graph increases is much slower compared to UMAP and SS-GUMAP, see the slope of the green curve (SSSL-GUMAP) compared to the yellow curve (GUMAP) in Figure \ref{fig:runtime_all}. 

\subsubsection{Quality Metrics}
On neighborhood preservation, SSSL-GUMAP obtains, on average, only 11\% higher neighborhood preservation, as seen in Figure \ref{fig:np_all}, supporting Hypothesis \ref{hyp:sssl_qual}. 
Especially on scale-free and GION graphs, the neighborhood preservation is less than 10\% lower than that of GUMAP, much smaller than the 83\% runtime improvement.

On shape-based metrics, SSSL-GUMAP obtains on average a 13\% difference in metrics compared to GUMAP (Figure \ref{fig:shp_all}), supporting Hypothesis \ref{hyp:sssl_qual}.
SSSL-GUMAP obtains less than 8\% lower shape-based metrics on mesh graphs, and 10\% lower on GION graphs.

For edge crossing metrics, SSSL-GUMAP obtains on average a 15\% difference compared to GUMAP, see Figure \ref{fig:crossing_all}. 
While slightly higher than SL-GUMAP, the difference is much lower than the 83\% runtime improvement over GUMAP, supporting Hypothesis \ref{hyp:sssl_qual}.

On stress, SSSL-GUMAP obtains on average just 20\% higher stress than GUMAP, much lower than the 80\% runtime improvement; 
see Figure \ref{fig:stress_all}.
Notably, on GION graphs, except  
GION 1 and GION 4, which also show outlier behaviour on both SS-GUMAP and SL-GUMAP, SSSL-GUMAP obtains even lower stress than GUMAP, supporting Hypothesis \ref{hyp:sssl_qual}. 
On mesh graphs, except for nasa1824, 
SSSL-GUMAP, on average, obtains only 12\% higher stress than GUMAP, much lower than the 83\% runtime improvement.

\subsubsection{Visual Comparison}

On visual comparison, SSSL-GUMAP tends to display similar strengths and shortcomings as SL-GUMAP, as seen in the last column of Table \ref{tab:viscomp}. 
On GION graphs, SSSL-GUMAP further reduces the number of overly long edges compared to SL-GUMAP. SSSL-GUMAP also shows better untangling of intra-cluster structures compared to GUMAP on some scale-free graphs without excessive distortion of the global/inter-cluster structure, such as the Facebook graph, where the clusters are not as ``compressed'' or the GION\_6 graphs, where the cycles are more visible without folding up the long diameter structure. However, on mesh graphs, it sometimes maintains the global structure, but not the internal regular grid, as seen in the data graph.

\subsubsection{Discussion}

Experiments support Hypothesis \ref{hyp:sssl_runtime}, showing that SSSL-GUMAP runs 83\% faster than GUMAP on average (see Figure \ref{fig:avg}(a)). 
The quality metrics of drawings by SSSL-GUMAP are slightly lower than SL-GUMAP, although the loss is still much smaller than the significant runtime improvement, overall supporting Hypothesis \ref{hyp:sssl_qual}.

In particular, excluding two outliers GION\_1 and GION\_4, stress is almost on the same level as GUMAP and lower than SS-GUMAP (see Figure \ref{fig:avg}(e)). 
On neighbourhood preservation, shape-based metrics, and edge crossings, the difference is also much smaller than the 83\% runtime improvement, at only 13\%, 11\%, and 15\% respectively (see Figure \ref{fig:avg}(b)-(d)).


\subsection{Summary and Recommendation}

Our experiments demonstrate the strengths of our fast UMAP algorithms for graph drawing, running significantly faster than the base GUMAP with a favorable trade-off between runtime and quality.

Our algorithms especially perform well on neighborhood preservation, where the drawings computed by our algorithms obtain similar neighborhood preservation to that of GUMAP.
As one of the strengths of UMAP is the preservation of point neighborhoods, this supports that our algorithms are able to preserve the main strength of UMAP with much faster runtime.

On visual comparison, our algorithms tend to work better on graphs with globally sparse and locally dense structures, such as the scale-free graphs and most GION graphs. 
A strength often seen over GUMAP is better untangling of intra-cluster structures for scale-free graphs, balanced with preservation of the global structure, especially visible for SL-GUMAP and SSSL-GUMAP (see the Facebook graph in Table \ref{tab:viscomp}). 
This may be due to the sampling reducing the number of pairs of neighbors on which $w_{ij}$ is computed each iteration, creating a ``repulsion'' effect between neighbors.

The runtime and quality metric comparisons also show different strengths between our algorithms. 
On runtime, SSSL-GUMAP is the clear winner, always obtaining the highest runtime improvement. 
On quality metrics, the best-performing algorithms differ per data set. 
For scale-free graphs, SS-GUMAP computes drawings with the most similar quality metrics to GUMAP overall. 
For GION and mesh graphs, SL-GUMAP computes drawings with the most similar neighborhood preservation and stress to GUMAP, while SS-GUMAP performs better on edge crossing and shape-based metrics.

The better performance of SL-GUMAP compared to SS-GUMAP on neighborhood preservation and stress for GION and mesh graphs may be due to SS-GUMAP's tendency to artificially introduce clusters in data sets without a dense clustering structure. 
An example can be seen in the outliers of the GION data set, GION 1 and GION 4, where even the drawing by GUMAP contains some overly long edges connecting artificially introduced clusters. 
This tendency is much more amplified in the drawing by SS-GUMAP compared to SL-GUMAP; the removal of edges may have created subgraphs having a higher density than the surroundings, artificially creating clusters.



\begin{figure}[h!]
    \centering
    \subfloat[Runtime]{
        \includegraphics[width=0.3\columnwidth]{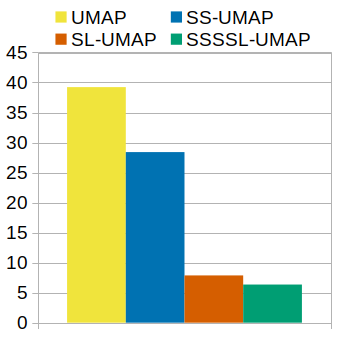}
        \label{fig:runtime_avg}
    }
    \subfloat[Neigh. preserv.]{
        \includegraphics[width=0.3\columnwidth]{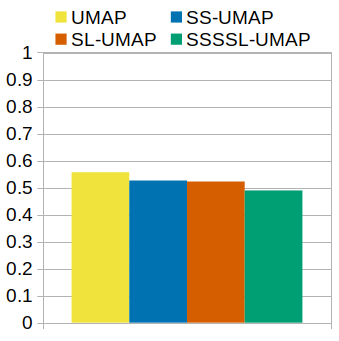}
        \label{fig:rnp_avg}
    }
    \subfloat[Shape-based]{
        \includegraphics[width=0.3\columnwidth]{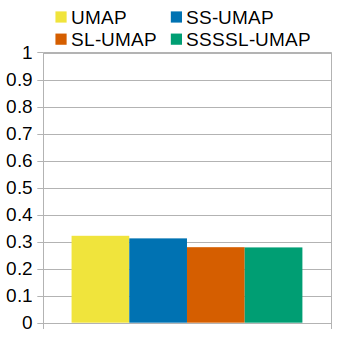}
        \label{fig:shp_avg}
    }
    \qquad
    \subfloat[Crossing]{
        \includegraphics[width=0.3\columnwidth]{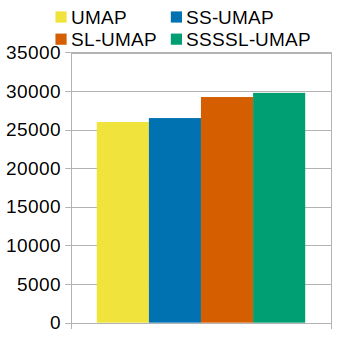}
        \label{fig:crossing_avg}
    }
        \subfloat[Stress]{
        \includegraphics[width=0.3\columnwidth]{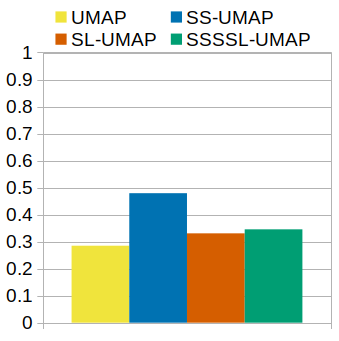}
        \label{fig:stress_avg}
    }
    \caption{Average runtime and quality metrics: our algorithms run significantly faster than GUMAP with much smaller reduction in quality metrics, obtaining much higher runtime gain over quality loss.}
    \label{fig:avg}
    \vspace{-2mm}
\end{figure}

\emph{In summary, we validate our hypotheses that SS-GUMAP, SL-GUMAP, and SSSL-GUMAP run significantly faster than GUMAP, on average 28\%, 80\%, and 83\% faster, respectively (see Figure \ref{fig:avg}(a)).
Furthermore, we show that for all quality metrics, any reduction is far less than the significant runtime improvement, providing a good runtime-quality trade-off (see the averages in Figure \ref{fig:avg}(b)-(e))}.

We recommend the following guidelines: if runtime is of the most concern, SSSL-GUMAP runs the fastest, while for better quality with good runtime, SS-GUMAP is recommended for scale-free graphs, while SL-GUMAP is recommended for mesh-like graphs.

\section{GUMAP vs tsNET}
\label{sec:ts_umap}

In order to place the study of UMAP for graph drawing in context, we now present our experiments comparing GUMAP to \emph{tsNET}~\cite{kruiger2017graph}, a graph drawing algorithm based on t-SNE~\cite{van2008visualizing}, a popular DR algorithm focusing on neighborhood preservation.


We focus on the comparison between tsNET and GUMAP: both UMAP and t-SNE focus on \emph{neighborhood preservation}, and tsNET, the baseline method for t-SNE-based graph drawing, has been compared to other popular graph drawing layouts, both DR-based or otherwise, including Pivot MDS, FM\textsuperscript{3}~\cite{hachul2004drawing}, sfdp~\cite{hu2005efficient}, and Stress Majorization~\cite{gansner2005graph}, and has been shown to obtain superior neighborhood preservation. While newer t-SNE algorithms have been presented, we focus on tsNET, which has been designed and implemented specifically for graph drawing. Furthermore, we focus on the comparison of tsNET as it is designed for drawing general graphs, as opposed to methods such as GraphTSNE~\cite{leow2019graphtsne}, which is designed for graphs with vertex properties.

Although a recent study provided a comparison of UMAP with tsNET~\cite{miller2023balancing}, tsNET with \emph{random} initialization was used. Pivot MDS (PMDS) initialization for tsNET has been shown to be both fast and to produce better quality results for tsNET~\cite{kruiger2017graph}. Thus, a fairer comparison with the respective suitable initializations for both tsNET and GUMAP would be important to compare the full performance of both algorithms. We also compare implementations in the same language (Python) to minimize discrepancies in efficiency due to technical constraints.

In addition, we focus on a baseline-to-baseline comparison between GUMAP and tsNET. Although DRGraph, another t-SNE-based graph drawing algorithm, has been presented~\cite{zhu2020drgraph}, it has been compared with tsNET, with similar quality to tsNET on average~\cite{zhu2020drgraph,ltsnet}. Thus, it is sufficient to compare the quality metrics of GUMAP to tsNET.

Note that we have not used GPU-based algorithms for our algorithms. Our aim is for a fast algorithm with low theoretical runtime, and many GPU-based implementations may obscure whether the runtime gain is obtained through parallelization or from reducing the theoretical runtime complexity.

\subsection{tsNET}

The \emph{tsNET}~\cite{kruiger2017graph} algorithm utilizes t-SNE for graph drawing. The original t-SNE algorithm computes a low-dimensional projection preserving the neighborhoods of data points by modelling pairwise similarities in the high- and low-dimensional space as probability distributions, then minimizing the Kullback-Leibler (KL) divergence between the two probability distributions.

Unlike t-SNE, which only optimizes one term for its cost function, tsNET optimizes a cost function $C$ consisting of the following three terms:

\begin{equation}
\label{eq:tsnet}
\begin{split}
C = C_{KL} + C_{CMP} + C_{ENT}, \\
C_{KL} = \lambda_{KL} \sum_{i \neq j} p_{ij} \log \frac{p_{ij}}{q_{ij}},\\
C_{CMP} = \frac{\lambda_c}{2n}\sum_i ||X_i||^2, \\
C_{ENT} = - \frac{\lambda_r}{2n^2} \sum_{i \neq j} \log (||X_i - X_j|| + \epsilon)
\end{split}
\end{equation}

The first term $C_{KL}$ is the same KL divergence term used in t-SNE, computed using the high-dimensional probability $p_{ij}$ and low-dimensional probability $q_{ij}$. 
The second term $C_{CMP}$ is \emph{compression}~\cite{van2008visualizing}, added to speed up convergence. The third term $C_{ENT}$ is \emph{entropy}~\cite{gansner2012maxent}, added to avoid clutter. The cost function of tsNET can be optimized in $O(n^2)$ time.

tsNET has been evaluated against well-known layout algorithms, including SFDP~\cite{hu2005efficient}, LinLog~\cite{noack2003energy}, GRIP~\cite{gajer2002grip}, and NEATO~\cite{north2004drawing}, outperforming the other layouts on neighborhood preservation~\cite{kruiger2017graph}.

\subsection{Experiment Design}


We use the tsNET implementation based on the implementations from Kruiger et al.~\cite {kruiger2017graph} and scikit-learn~\cite{pedregosa2011scikit}. For tsNET, we use Pivot MDS (PMDS)~\cite{brandes2006eigensolver} initialization, per the recommendation of the tsNET authors~\cite{kruiger2017graph}.

For each graph, perplexity is set as the minimum between 40 and the smallest multiple of 100 that does not produce a ``perplexity too low'' error. Both tsNET and UMAP are run either until convergence is reached, or the maximum number of iterations, set at 500 based on the maximum of the default values for tsNET and UMAP, is reached.


We use the same data sets as used in Section \ref{sec:eval} for our experiments, separated into three sets: benchmark \emph{real-world scale-free} graphs, with a globally sparse and locally dense structure~\cite{snapnets}; biochemical \emph{GION} graphs with large diameters~\cite{marner2014gion}; and (3) \emph{mesh} graphs with regular grid-like structures~\cite{davis2011university}. See Table \ref{table:data} for details.


To evaluate the algorithms, we select popular quality metrics, taken from both graph drawing and dimension reduction studies, the same as used in Section \ref{sec:eval}: neighborhood preservation, edge crossing, shape-based metrics, and stress. 
%
For each experiment, we take the average runtime and metrics over five runs.

\begin{table*}[h!]
    \centering
    \caption{Visual comparison of GUMAP and tsNET: in general, GUMAP tends to group clusters closer together, while tsNET tends to ``expand'' the local neighborhood of vertices more.}
    \begin{tabular}{|c|c|c|c|}
    \hline
    GUMAP & tsNET & GUMAP & tsNET \\ \hline
    \multicolumn{2}{|c|}{as19990606} & \multicolumn{2}{|c|}{G\_4\_0} \\ \hline
    \includegraphics[width=0.36\columnwidth]{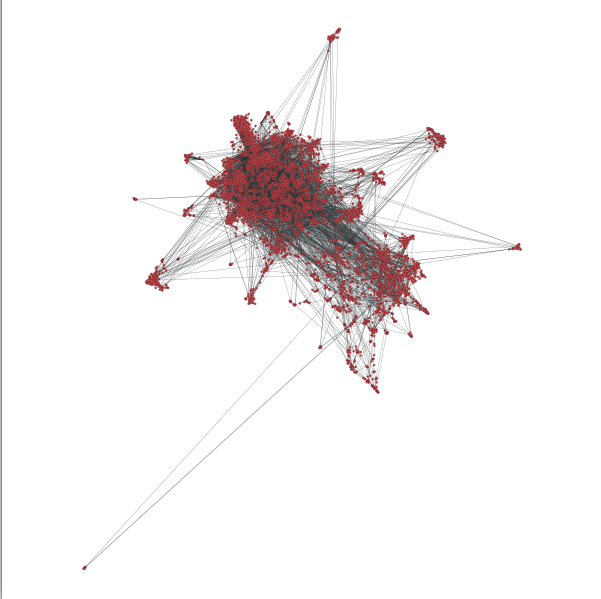} &
    \includegraphics[width=0.36\columnwidth]{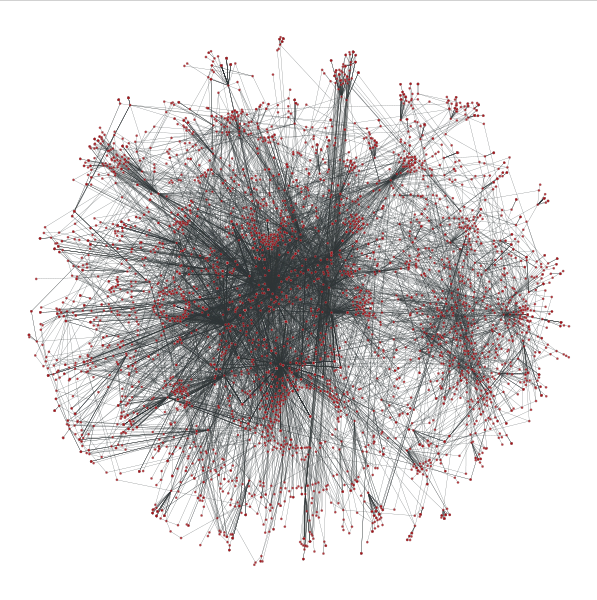} &
    \includegraphics[width=0.36\columnwidth]{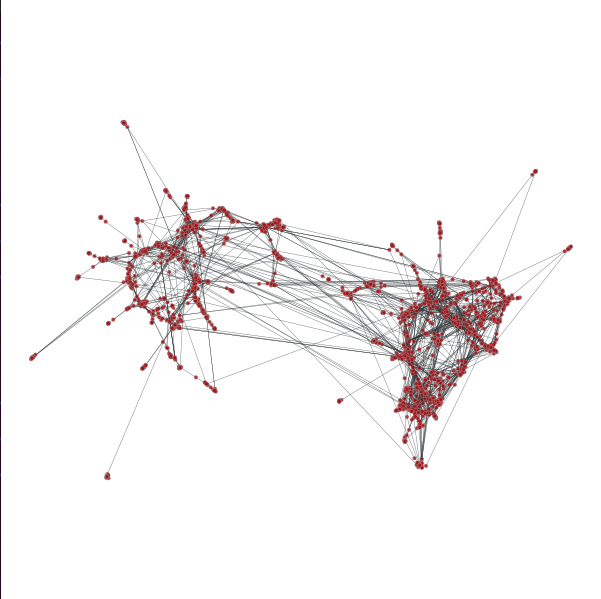} &
    \includegraphics[width=0.36\columnwidth]{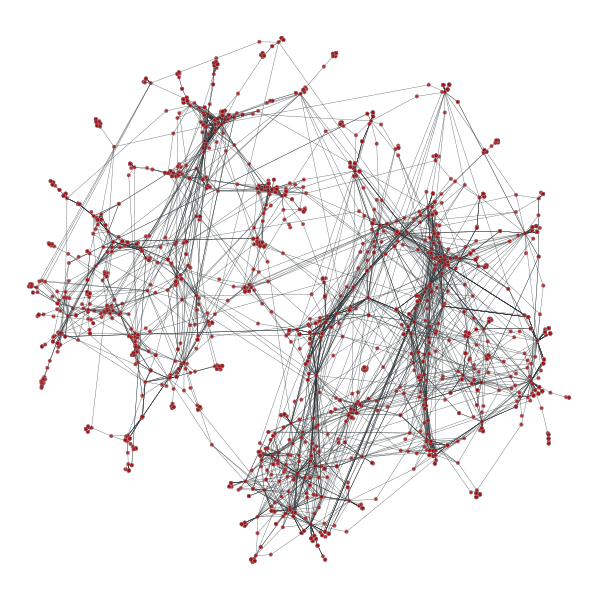} \\ \hline
    \multicolumn{2}{|c|}{GION\_1} & \multicolumn{2}{|c|}{plat1919} \\ \hline
    \includegraphics[width=0.36\columnwidth]{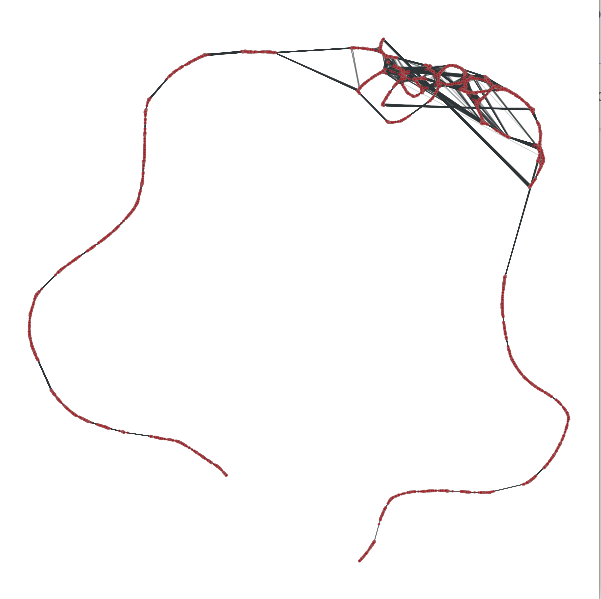} &
    \includegraphics[width=0.36\columnwidth]{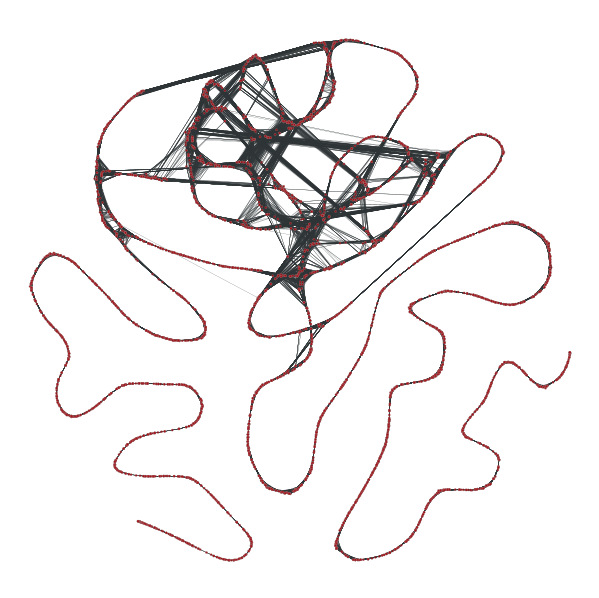} &
    \includegraphics[width=0.36\columnwidth]{figs/plat1919_base.png} &
    \includegraphics[width=0.36\columnwidth]{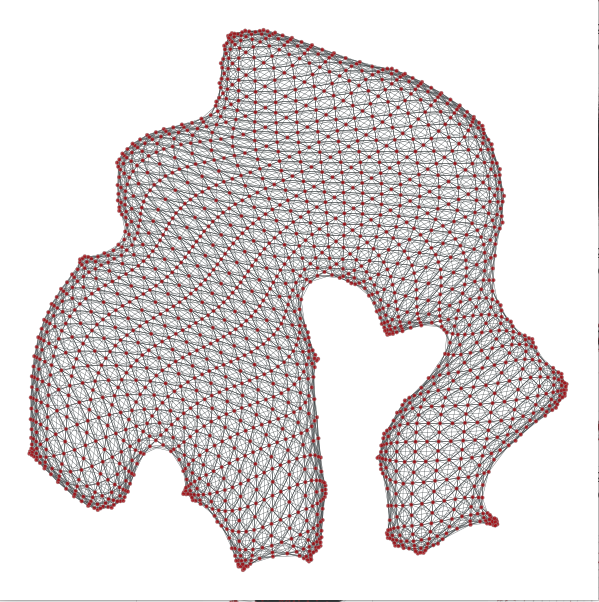} \\ \hline
    \end{tabular}
    \label{tab:tsviscomp}
\end{table*}

\subsection{Comparison Experiments}
\subsubsection{Runtime}

To compare GUMAP with tsNET, we compute the percentage improvement in runtime and metrics as the difference between the values obtained by GUMAP and tsNET divided by the value obtained by tsNET, similar as used with the experiments in Section \ref{sec:eval}. 

\begin{figure}[h!]
    \centering
    \includegraphics[width=\columnwidth]{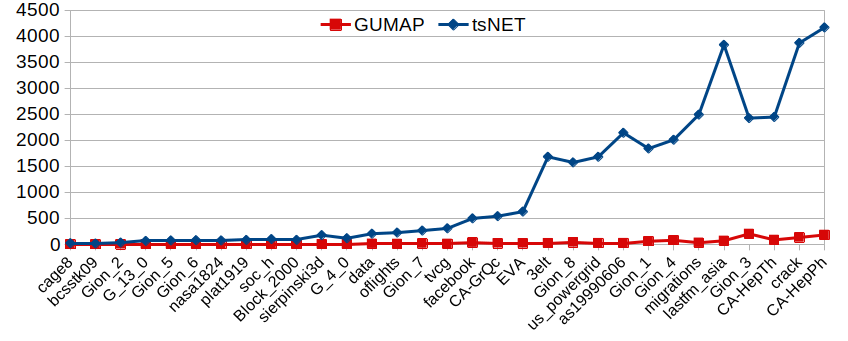}
    \caption{Runtime comparison of GUMAP and tsNET (in seconds): GUMAP runs significantly faster than tsNET.}
    \label{fig:ts_umap_runtime}
\end{figure}

Figure \ref{fig:ts_umap_runtime} shows the runtime of GUMAP and tsNET in seconds, with the graphs sorted by size. Clearly, GUMAP vastly outperforms tsNET on runtime efficiency, running over 90\% faster on average. It can be seen that the growth in runtime is also much less steep than tsNET. This observation is consistent with the faster runtime of UMAP over t-SNE for dimension reduction~\cite{mcinnes2018umap}.

\subsubsection{Quality Metrics}
Figures \ref{fig:ts_umap_np}-\ref{fig:ts_umap_stress} show the quality metric comparison of GUMAP and tsNET. Overall, GUMAP computes similar neighborhood preservation metrics compared to tsNET, as seen in Figure \ref{fig:ts_umap_np}, with on average only a 2\% difference. Similarly, for edge crossing, as seen in Figure \ref{fig:ts_umap_crossing}, GUMAP consistently computes drawings with similar or even slightly lower edge crossing than tsNET, at 8\% better on average.

\begin{figure}[h!]
    \centering
    \includegraphics[width=\columnwidth]{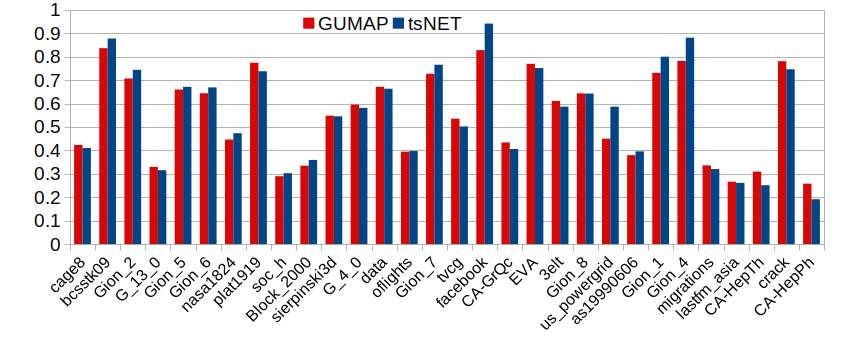}
    \caption{Neighborhood preservation comparison of GUMAP and tsNET: GUMAP obtains similar neighborhood preservation to tsNET.}
    \label{fig:ts_umap_np}
\end{figure}

\begin{figure}[h!]
    \centering
    \includegraphics[width=\columnwidth]{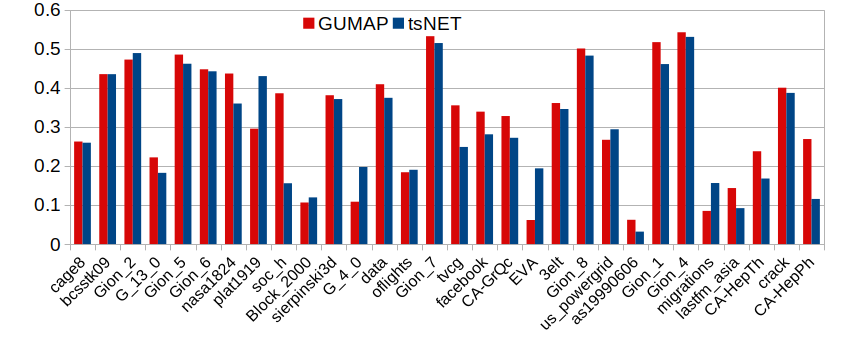}
    \caption{Shape-based metrics comparison of GUMAP and tsNET: on average, cases where GUMAP outperforms tsNET on shape-based metrics outnumber the opposite cases.}
    \label{fig:ts_umap_shp}
\end{figure}

\begin{figure}[h!]
    \centering
    \includegraphics[width=\columnwidth]{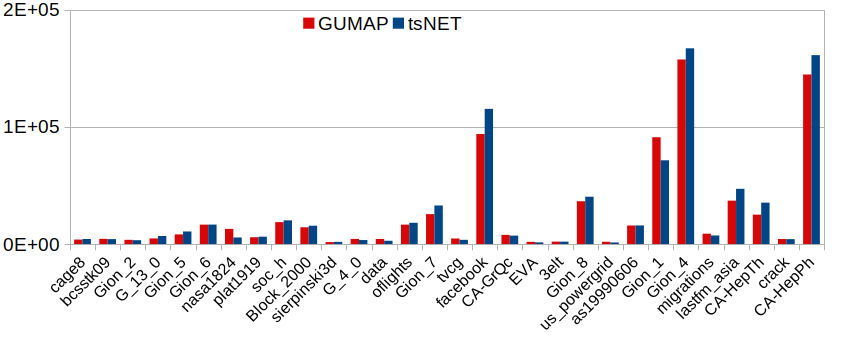}
    \caption{Edge crossing comparison of GUMAP and tsNET: on average, GUMAP obtains 8\% lower edge crossing than tsNET.}
    \label{fig:ts_umap_crossing}
\end{figure}

\begin{figure}[h!]
    \centering
    \includegraphics[width=\columnwidth]{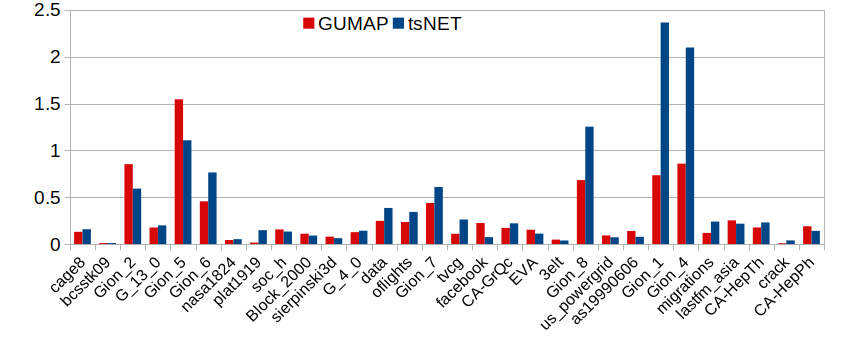}
    \caption{Stress comparison of GUMAP and tsNET: on average, GUMAP obtains 18\% lower stress than tsNET.}
    \label{fig:ts_umap_stress}
\end{figure}

On shape-based metrics, as seen in Figure \ref{fig:ts_umap_shp}, on most graphs, both GUMAP and tsNET compute drawings obtaining similar metrics. On pairwise comparisons of shape-based metrics on drawings of the same graph, GUMAP obtains on average 15\% better metrics over tsNET: cases where GUMAP produces drawings with similar or even higher shape-based metrics, especially on larger scale-free and GION graphs such as Gion\_3, CA-HepTh and CA-HepPh, outnumber cases such as on a few sparser graphs, such as G\_4\_0 and EVA, where tsNET produces drawings with higher shape-based metrics.

On stress, as seen in Figure \ref{fig:ts_umap_stress}, drawings by GUMAP obtain on average 18\% lower stress than tsNET. On most graphs, GUMAP and tsNET produce drawings with similar metrics, with cases where GUMAP significantly outperforms tsNET outnumbering the opposite. The largest improvements are seen on the larger GION graphs with high density and diameters, such as GION\_8, GION\_1, and GION\_4.

\subsubsection{Visual Comparison}

Table \ref{tab:tsviscomp} shows a few examples of visual comparisons between GUMAP and tsNET. In general, both GUMAP and tsNET are able to produce high-quality graph drawings, but with different characteristics. 

On scale-free graphs, such as as19990606 and G\_4\_0, UMAP tends to group vertices closer together; while drawings by tsNET also still show some clustering structures, the clusters tend to be more ``spread out'' compared to the more ``compact'' drawings by GUMAP. Similar patterns can be seen on graphs from other sets, where the drawings by GUMAP are more ``compact'' while tsNET drawings are more ``spread out''. 

While the more compressed drawings by GUMAP might de-emphasize local structures compared to tsNET, this can provide better balance with preserving the global structure compared to the over-spreading of tsNET, such as with the global contour of the mesh being more distorted in plat1919, or with the twisted ``long arms'' in GION\_1.

\subsection{Discussion and Summary}

Our experiments demonstrate the effectiveness of GUMAP for graph drawing by comparison to tsNET, a graph drawing algorithm based on the popular DR algorithm t-SNE, which has been shown to outperform other popular graph drawing algorithms on neighborhood preservation. Figure \ref{fig:ts_umap_avg} shows the average runtime and quality metrics of GUMAP and tsNET, showing the significant 90\% runtime improvement of GUMAP over tsNET with similar neighbourhood preservation and even better on crossing and stress, at 15\% and 18\% better, respectively.

\begin{figure}[h!]
    \centering
    \subfloat[Runtime]{
        \includegraphics[width=0.3\columnwidth]{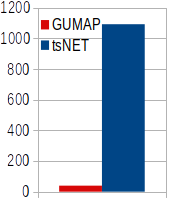}
        \label{fig:runtime_avg}
    }
    \subfloat[Neigh. preserv.]{
        \includegraphics[width=0.3\columnwidth]{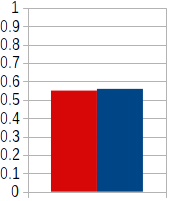}
        \label{fig:rnp_avg}
    }
    \subfloat[Shape-based]{
        \includegraphics[width=0.3\columnwidth]{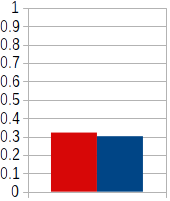}
        \label{fig:shp_avg}
    }
    \qquad
    \subfloat[Crossing]{
        \includegraphics[width=0.3\columnwidth]{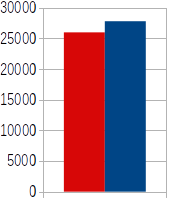}
        \label{fig:crossing_avg}
    }
        \subfloat[Stress]{
        \includegraphics[width=0.3\columnwidth]{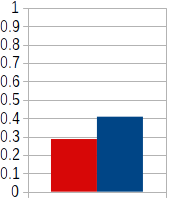}
        \label{fig:stress_avg}
    }
    \caption{Average runtime and quality metrics: GUMAP runs over 90\% faster than tsNET with similar neighborhood preservation and better stress and crossing.}
    \label{fig:ts_umap_avg}
    \vspace{-2mm}
\end{figure}

GUMAP runs significantly faster at over 90\% faster than tsNET, consistent with the significant runtime improvement of UMAP over t-SNE; despite the added $O(nm)$ preprocessing, the $O(n \log n)$ time cost optimization of GUMAP provides speed-ups over tsNET. Drawings by GUMAP obtain the same level of neighborhood preservation as drawings by tsNET, while also obtaining slightly better edge crossings and significantly better shape-based metrics and stress. In addition, visual comparisons show a strength of GUMAP over tsNET, where it avoids the pitfall of tsNET of occasionally over-expanding local vertex neighborhoods to the extent of distorting the overall shape of the graph drawing.

\emph{In summary, GUMAP runs over 90\% faster with the same level of neighborhood preservation as tsNET, as well as better edge crossing and stress, 15\% and 18\% better, respectively.}

\section{Conclusion}
\label{sec:conclusion}

We present a study on fast graph drawing UMAP algorithms. We present three fast UMAP algorithms, which improve the runtime over GUMAP, by utilizing spectral sparsification for sampling a representative subgraph with $O(n \log n)$ edges, partial BFS to reduce the runtime of shortest path and $k$NN graph computations to $O(n)$ time, and edge sampling on the $k$NN graph to reduce the runtime of the gradient descent iterations to sublinear time.

Experiments show that SS-GUMAP, SL-GUMAP, and SSSL-GUMAP achieve significant runtime improvements over UMAP for graph drawing, while maintaining a balance between the runtime gain and preservation of the quality of the drawing.
SS-GUMAP runs 28\% faster than UMAP with almost the same neighborhood preservation, edge crossing, and shape-based metrics, while SL-GUMAP and SSSL-GUMAP run over 80\% faster than UMAP with less than 15\% difference on all quality metrics from UMAP on average.

We also demonstrate the effectiveness of UMAP for graph drawing through the comparison of GUMAP, a direct application of UMAP to graph drawing, to tsNET, where GUMAP runs over 90\% faster than tsNET while obtaining the same level of neighborhood preservation.

Future work includes reducing the runtime of the UMAP graph drawing algorithms to overall sub-linear-time while preserving the quality, to improve scalability to even larger graphs with hundreds of thousands of vertices and edges.
For example, recently sublinear-time force-directed~\cite{meidiana2020sublinear,meidiana2020sublinear-tvcg} algorithms and sublinear-time stress minimization algorithms~\cite{meidiana2021stress} have been successfully developed. 


%




\ifCLASSOPTIONcaptionsoff
  \newpage
\fi



\bibliographystyle{IEEEtran}
\bibliography{IEEEabrv,template}

\begin{thebibliography}{10}
\providecommand{\url}[1]{#1}
\csname url@samestyle\endcsname
\providecommand{\newblock}{\relax}
\providecommand{\bibinfo}[2]{#2}
\providecommand{\BIBentrySTDinterwordspacing}{\spaceskip=0pt\relax}
\providecommand{\BIBentryALTinterwordstretchfactor}{4}
\providecommand{\BIBentryALTinterwordspacing}{\spaceskip=\fontdimen2\font plus
\BIBentryALTinterwordstretchfactor\fontdimen3\font minus \fontdimen4\font\relax}
\providecommand{\BIBforeignlanguage}[2]{{%
\expandafter\ifx\csname l@#1\endcsname\relax
\typeout{** WARNING: IEEEtran.bst: No hyphenation pattern has been}%
\typeout{** loaded for the language `#1'. Using the pattern for}%
\typeout{** the default language instead.}%
\else
\language=\csname l@#1\endcsname
\fi
#2}}
\providecommand{\BIBdecl}{\relax}
\BIBdecl

\bibitem{mcinnes2018umap}
L.~McInnes, J.~Healy, and J.~Melville, ``Umap: Uniform manifold approximation and projection for dimension reduction,'' \emph{arXiv preprint arXiv:1802.03426}, 2018.

\bibitem{espadoto2019toward}
M.~Espadoto, R.~M. Martins, A.~Kerren, N.~S. Hirata, and A.~C. Telea, ``Toward a quantitative survey of dimension reduction techniques,'' \emph{TVCG}, vol.~27, no.~3, pp. 2153--2173, 2019.

\bibitem{beygelzimer2006cover}
A.~Beygelzimer, S.~Kakade, and J.~Langford, ``Cover trees for nearest neighbor,'' in \emph{ICML}, 2006, pp. 97--104.

\bibitem{liu2004investigation}
T.~Liu, A.~Moore, K.~Yang, and A.~Gray, ``An investigation of practical approximate nearest neighbor algorithms,'' \emph{Advances in neural information processing systems}, vol.~17, 2004.

\bibitem{dong2011efficient}
W.~Dong, C.~Moses, and K.~Li, ``Efficient k-nearest neighbor graph construction for generic similarity measures,'' in \emph{WWW}, 2011, pp. 577--586.

\bibitem{kruiger2017graph}
J.~F. Kruiger, P.~E. Rauber, R.~M. Martins, A.~Kerren, S.~Kobourov, and A.~C. Telea, ``Graph layouts by t-sne,'' in \emph{CGF}, vol.~36, no.~3, 2017, pp. 283--294.

\bibitem{van2008visualizing}
L.~Van~der Maaten and G.~Hinton, ``Visualizing data using t-sne.'' \emph{HMLR}, vol.~9, no.~11, 2008.

\bibitem{chan2012all}
T.~M. Chan, ``All-pairs shortest paths for unweighted undirected graphs in o (mn) time,'' \emph{TALG}, vol.~8, no.~4, pp. 1--17, 2012.

\bibitem{gove2019random}
R.~Gove, ``A random sampling o(n) force-calculation algorithm for graph layouts,'' \emph{CGF}, vol.~38, no.~3, pp. 739--751, 2019.

\bibitem{meidiana2020sublinear-tvcg}
A.~Meidiana, S.~Hong, S.~Cai, M.~Torkel, and P.~Eades, ``Sublinearforce: Fully sublinear-time force computation for large complex graph drawing,'' \emph{{IEEE} TVCG}, vol.~30, no.~7, pp. 3386--3399, 2024.

\bibitem{ortmann2016sparse}
M.~Ortmann, M.~Klimenta, and U.~Brandes, ``A sparse stress model,'' in \emph{Graph Drawing}, 2016, pp. 18--32.

\bibitem{meidiana2021stress}
A.~Meidiana, J.~Wood, and S.-H. Hong, ``Sublinear-time algorithms for stress minimization in graph drawing,'' in \emph{Pacific Visualization}, 2021, pp. 166--175.

\bibitem{yianilos1993data}
P.~N. Yianilos, ``Data structures and algorithms for nearest neighbor,'' in \emph{ACM-SIAM Symposium on Discrete Algorithms}, vol.~66, 1993, p. 311.

\bibitem{spielman2007spectral}
D.~A. Spielman, ``Spectral graph theory and its applications,'' in \emph{IEEE FOCS}, 2007, pp. 29--38.

\bibitem{zhu2020drgraph}
M.~Zhu, W.~Chen, Y.~Hu, Y.~Hou, L.~Liu, and K.~Zhang, ``Drgraph: An efficient graph layout algorithm for large-scale graphs by dimensionality reduction,'' \emph{IEEE TVCG}, vol.~27, no.~2, pp. 1666--1676, 2020.

\bibitem{miller2023balancing}
J.~Miller, V.~Huroyan, and S.~Kobourov, ``Balancing between the local and global structures (lgs) in graph embedding,'' in \emph{Graph Drawing}, 2023, pp. 263--279.

\bibitem{jeon2022uniform}
H.~Jeon, H.-K. Ko, S.~Lee, J.~Jo, and J.~Seo, ``Uniform manifold approximation with two-phase optimization,'' in \emph{IEEE VIS}, 2022, pp. 80--84.

\bibitem{leskovec2006sampling}
J.~Leskovec and C.~Faloutsos, ``Sampling from large graphs,'' in \emph{SIGKDD}, 2006, pp. 631--636.

\bibitem{wu2017evaluation}
Y.~Wu, N.~Cao, D.~Archambault, Q.~Shen, H.~Qu, and W.~Cui, ``Evaluation of graph sampling: A visualization perspective,'' \emph{TVCG}, vol.~23, no.~1, pp. 401--410, 2017.

\bibitem{hong2018bc}
S.-H. Hong, Q.~Nguyen, A.~Meidiana, J.~Li, and P.~Eades, ``{BC} tree-based proxy graphs for visualization of big graphs,'' in \emph{PacificVis}, 2018, pp. 11--20.

\bibitem{meidiana2019topology}
A.~Meidiana, S.-H. Hong, J.~Huang, P.~Eades, and K.-L. Ma, ``Topology-based spectral sparsification,'' in \emph{LDAV}, 2019, pp. 73--82.

\bibitem{eades2017drawing}
P.~Eades, Q.~Nguyen, and S.-H. Hong, ``Drawing big graphs using spectral sparsification,'' in \emph{GD}, 2017, pp. 272--286.

\bibitem{JM}
J.~Hu, S.~Hong, and P.~Eades, ``Spectral vertex sampling for big complex graphs,'' in \emph{Complex Networks}.\hskip 1em plus 0.5em minus 0.4em\relax Springer, 2019, pp. 216--227.

\bibitem{spielman2011spectral}
D.~A. Spielman and S.-H. Teng, ``Spectral sparsification of graphs,'' \emph{SIAM Journal on Computing}, vol.~40, no.~4, pp. 981--1025, 2011.

\bibitem{spielman2011graph}
D.~A. Spielman and N.~Srivastava, ``Graph sparsification by effective resistances,'' \emph{SIAM Journal on Computing}, vol.~40, no.~6, pp. 1913--1926, 2011.

\bibitem{JMjournal}
J.~Hu, T.~T. Chu, S.~Hong, J.~Chen, A.~Meidiana, M.~Torkel, P.~Eades, and K.~Ma, ``{BC} tree-based spectral sampling for big complex network visualization,'' \emph{Appl. Netw. Sci.}, vol.~6, no.~1, p.~60, 2021.

\bibitem{Hu3}
------, ``{BC} tree-based spectral sampling for big complex network visualization,'' \emph{Appl. Netw. Sci.}, vol.~6, no.~1, p.~60, 2021.

\bibitem{brandes2006eigensolver}
U.~Brandes and C.~Pich, ``Eigensolver methods for progressive multidimensional scaling of large data,'' in \emph{Graph Drawing}, 2006, pp. 42--53.

\bibitem{koren2003spectral}
Y.~Koren, ``On spectral graph drawing,'' in \emph{International Computing and Combinatorics Conference}.\hskip 1em plus 0.5em minus 0.4em\relax Springer, 2003, pp. 496--508.

\bibitem{meidiana2020sublinear}
A.~Meidiana, S.-H. Hong, M.~Torkel, S.~Cai, and P.~Eades, ``Sublinear time force computation for big complex network visualization,'' in \emph{Computer Graphics Forum}, vol.~39, no.~3, 2020, pp. 579--591.

\bibitem{hagberg2008exploring}
A.~Hagberg, P.~Swart, and D.~S. Chult, ``Exploring network structure, dynamics, and function using networkx,'' Los Alamos National Lab.(LANL), Tech. Rep., 1 2008.

\bibitem{gsparse}
\BIBentryALTinterwordspacing
T.~Chavengsaksongkram. (2019) gsparse. [Online]. Available: \url{https://github.com/As-12/gSparse}
\BIBentrySTDinterwordspacing

\bibitem{snapnets}
J.~Leskovec and A.~Krevl, ``{SNAP Datasets}: {Stanford} large network dataset collection,'' \url{http://snap.stanford.edu/data}, Jun. 2014.

\bibitem{marner2014gion}
M.~R. Marner, R.~T. Smith, B.~H. Thomas, K.~Klein, P.~Eades, and S.-H. Hong, ``Gion: interactively untangling large graphs on wall-sized displays,'' in \emph{Graph Drawing}, 2014, pp. 113--124.

\bibitem{davis2011university}
T.~A. Davis and Y.~Hu, ``The university of florida sparse matrix collection,'' \emph{ACM TOMS}, vol.~38, no.~1, p.~1, 2011.

\bibitem{battista1998graph}
G.~Di~Battista, P.~Eades, R.~Tamassia, and I.~G. Tollis, \emph{Graph drawing: algorithms for the visualization of graphs}, 1998.

\bibitem{martins2015explaining}
R.~M. Martins, R.~Minghim, A.~C. Telea \emph{et~al.}, ``Explaining neighborhood preservation for multidimensional projections.'' in \emph{CGVC}, 2015, pp. 7--14.

\bibitem{eades2017shape}
P.~Eades, S.-H. Hong, A.~Nguyen, and K.~Klein, ``Shape-based quality metrics for large graph visualization.'' \emph{JGAA}, vol.~21, no.~1, pp. 29--53, 2017.

\bibitem{toussaint1986computational}
G.~T. Toussaint, \emph{Computational geometry and morphology}.\hskip 1em plus 0.5em minus 0.4em\relax MacGill University. School of Computer Science, 1986.

\bibitem{hong2022dgg}
S.-H. Hong, A.~Meidiana, J.~Wood, J.~P. Ataides, P.~Eades, and K.~Park, ``d{GG}, d{RNG}, {DSC}: New degree-based shape-based faithfulness metrics for large and complex graph visualization,'' in \emph{PacificVis}, 2022, pp. 51--60.

\bibitem{hachul2004drawing}
S.~Hachul and M.~J{\"u}nger, ``Drawing large graphs with a potential-field-based multilevel algorithm,'' in \emph{Graph Drawing}, ser. LNCS, J.~Pach, Ed., vol. 3383, 2004, pp. 285--295.

\bibitem{hu2005efficient}
Y.~Hu, ``Efficient, high-quality force-directed graph drawing,'' \emph{Mathematica}, vol.~10, no.~1, pp. 37--71, 2005.

\bibitem{gansner2005graph}
E.~R. Gansner, Y.~Koren, and S.~North, ``Graph drawing by stress majorization,'' in \emph{Graph Drawing}, 2005, pp. 239--250.

\bibitem{leow2019graphtsne}
Y.~Y. Leow, T.~Laurent, and X.~Bresson, ``Graphtsne: a visualization technique for graph-structured data,'' \emph{arXiv preprint arXiv:1904.06915}, 2019.

\bibitem{ltsnet}
A.~Meidiana and S.-H. H.~K. Ma, ``Bh-tsnet, fit-tsnet, and l-tsnet: Fast tsnet algorithms for large graph drawing,'' \emph{Submitted}, 2024.

\bibitem{gansner2012maxent}
E.~R. Gansner, Y.~Hu, and S.~North, ``A maxent-stress model for graph layout,'' \emph{IEEE TVCG}, vol.~19, no.~6, pp. 927--940, 2012.

\bibitem{noack2003energy}
A.~Noack, ``An energy model for visual graph clustering,'' in \emph{Graph Drawing}, ser. LNCS, vol. 2912.\hskip 1em plus 0.5em minus 0.4em\relax Springer, 2003, pp. 425--436.

\bibitem{gajer2002grip}
P.~Gajer and S.~G. Kobourov, ``Grip: Graph drawing with intelligent placement,'' in \emph{Graph Drawing}, 2002, pp. 222--228.

\bibitem{north2004drawing}
S.~C. North, ``Drawing graphs with neato,'' \emph{NEATO User manual}, vol.~11, no.~1, 2004.

\bibitem{pedregosa2011scikit}
F.~Pedregosa, G.~Varoquaux, A.~Gramfort, V.~Michel, B.~Thirion, O.~Grisel, M.~Blondel, P.~Prettenhofer, R.~Weiss, V.~Dubourg \emph{et~al.}, ``Scikit-learn: Machine learning in python,'' \emph{JMLR}, vol.~12, pp. 2825--2830, 2011.

\end{thebibliography}
\end{document}